\title{Concurrent Games with Multiple Topologies} 
\author{Shaull Almagor}{Department of Computer Science, Technion, 3200002, Israel}{shaull@cs.technion.ac.il}{0000-0001-9021-1175}{}
\author{Shai Guendelman}
{Department of Computer Science, Technion, 3200002, Israel}
{shaigue@campus.technion.ac.il}
{} 
{} 
\authorrunning{S. Almagor and S. Guendelman} 
\keywords{Concurrent games, Nash Equilibrium, Symmetry, Partial information} 
\newcommand{\cT}{\mathcal{T}}
\newcommand{\cH}{\mathcal{H}}
\newcommand{\eve}{\mathtt{Eve}}
\newcommand{\adam}{\mathtt{Adam}}
\newcommand{\snake}{\mathtt{Snake}}
\newcommand{\parity}{\mathrm{Parity}}
\newcommand{\game}{\mathcal{G}}
\newcommand{\Inf}{\mathrm{Inf}}
\renewcommand{\inf}{\Inf}
\newcommand{\proj}{\mathrm{proj}}
\newcommand{\obey}{\mathrm{obey}}
\newcommand{\nat}{\mathbb{N}}
\newcommand{\tup}[1]{\langle #1 \rangle}
\newcommand{\true}{\mathtt{true}}
\newcommand{\false}{\mathtt{false}}
\newcommand{\substitute}[3]{{#1}[{#2} \mapsto {#3}]}
\newcommand{\players}{\mathsf{Pla}}
\newcommand{\states}{\mathsf{S}}
\newcommand{\actions}{\mathsf{Act}}
\newcommand{\observations}{\mathcal{O}}
\newcommand{\objective}{\alpha}
\newcommand{\transFunc}{\delta}
\newcommand{\topologies}{\mathsf{Top}}
\newcommand{\obs}{\mathrm{obs}}
\newcommand{\strategies}[2]{\Sigma^{#1}_{#2}}
\newcommand{\outcome}[2][]{\mathrm{out}_{#1}({#2})}
\newcommand{\winners}[2][]{\mathrm{Win}_{#1}({#2})}
\newcommand{\wintop}[3][]{\mathrm{WinTop}_{#1}^{#2}({#3})}
\newcommand{\concurrentGame}{\tup{
\players,
\states,
s_0,
\actions,
\transFunc,
(\objective_p)_{p\in \players}
}}
\newcommand{\multiTopologyGame}{\tup{
\players,
\states,
s_0,
\actions,
\topologies,
(\transFunc_t)_{t\in \topologies},
(\objective_{t,p})_{t\in \topologies,p\in \players}
}}
\newcommand{\partialInformationGame}{\tup{
\players,
\states,
s_0,
\actions,
\transFunc,
(\observations_p)_{p\in \players}
}}
\newcommand{\red}{{\color{red} \texttt{red}}}
\newcommand{\blue}{{\color{blue} \texttt{blue}}}
\renewcommand{\vec}[1]{\boldsymbol{#1}}
\begin{document}

\maketitle

\begin{abstract}
Concurrent multi-player games with $\omega$-regular objectives are a standard model for systems that consist of several interacting components, each with its own objective. 
The standard solution concept for such games is Nash Equilibrium, which is a ``stable'' strategy profile for the players. 

In many settings, the system is not fully observable by the interacting components, e.g., due to internal variables. Then, the interaction is modelled by a partial information game. Unfortunately, the problem of  whether a partial information game has an NE is undecidable for the general case.
A particular setting of partial information arises naturally when processes are assigned IDs by the system, but these IDs are not known to the processes. Then, the processes have full information about the state of the system, but are uncertain of the effect of their actions on the transitions. 

We generalize the setting above and introduce Multi-Topology Games (MTGs) -- concurrent games with several possible topologies, where the players do not know which topology is actually used.
We show that extending the concept of NE to these games can take several forms. To this end, we propose two notions of NE: Conservative NE, in which a player deviates if she can strictly add topologies to her winning set, and Greedy NE, where she deviates if she can win in a previously-losing topology.
We study the properties of these NE, and show that the problem of whether a game admits them is decidable.
\end{abstract}

\newpage

\section{Introduction} 
\label{sec:intro}
Concurrent multi-player games of infinite duration over graphs are a standard modelling tool for representing systems that consist of several interacting components, each having its own objective. 
Each player in the game corresponds to a component in the interaction. In each round of the game each of the player chooses an action and the next state of the game is determined by the current state and the vector of actions chosen. A strategy for a player is then a mapping from the history of the game so far to the next action.

A strategy profile (i.e., a tuple of strategies, one for each player) induces an infinite trace of states, and the goal of each player is to direct the game into a trace that satisfies her specification. This is modeled by augmenting the game with $\omega$-regular objectives describing the objectives of the players.

Unlike traditional zero-sum games, here the objectives of the players do not necessarily contradict each other. Accordingly, the typical questions about these games concern their stability. Specifically, the most well-known stability measure is Nash Equilibrium (NE): an NE is a strategy profile such that no single player can improve her outcome by unilaterally deviating from the profile. 
The problem of whether a multi-player game with $\omega$-regular objectives has an NE was shown to be  decidable in~\cite{bouyer2015pure}.

In many settings, the players only have partial information about the system, or can view only certain parts of it. This happens when e.g., the system has private and global variables, and the players model threads that can only view the global variables. 
To this end, games with \emph{partial information} have been extensively studied in various forms~\cite{berthon2021strategy,bouyer2017nash,chatterjee2010complexity,chatterjee2014games}. However, in contrast to the full-information setting, the problem of deciding whether a partial-information multi-player game of infinite duration has a Nash equilibrium is undecidable in the general case where there are 3 or more players~\cite{filiot2018rational} or in the case of stochastic games~\cite{ummels2010complexity}.

In this work, we introduce and study \emph{Multi-Topology Games (MTG)}. Intuitively, an MTG is a concurrent multi-player game with several transition functions (i.e., topologies). Then, players are fully aware of the possible topologies of the game, but do not know which topology they currently play on. Thus, MTGs capture a restricted form of partial information.

As we now demonstrate, MTGs naturally model the sort of partial information that arises in the context of \emph{process symmetry}.
\begin{example}
\label{xmp:process symmetry}
Consider a virtual router with multiple ports. When the router is initialized, several processes are plugged in. The router assigns each process to a port id, but the id is not revealed to the processes. 
Each process attempts to send messages, and its goal is to have its messages delivered (where some messages may be dropped due to heavy traffic).
While the processes know exactly how the router works, they do not know which port they are assigned to. Therefore, their strategies must be oblivious to their port number. 

As a concrete example, consider the concurrent game in \cref{fig:router-game} with players $\{\blue,\red\}$.
When both players know the port assignment, for example, $\blue\to$Port $1$ and $\red\to$Port $2$, then $\blue$ can win by always taking action 1, and $\red$ will lose in any strategy.
However, if the port assignment is not known then in order for either player to win under both port assignments, the players must coordinate e.g., by taking turns trying to send a message.
Thus, a-priori, the game has two possible topologies: \cref{subfig:n1-p1-n2-p2} and \cref{subfig:n1-p2-n2-p1}.

\begin{figure}[ht]
\centering
\begin{subfigure}{0.45\textwidth}
\centering
\begin{tikzpicture} [auto]
    \node (ready) [state, initial below] {$ready$};
    \node (send1) [state, left = of ready, color = blue] {$send_1$};
    \node (send2) [state, right = of ready, color = red] {$send_2$};
    \path [-to]
        (ready) edge [loop above] node {\color{blue}0\color{red}0} ()
        (ready) edge [bend right] node [above] {\color{blue}1\color{red}0\color{black},\color{blue}1\color{red}1} (send1)
        (send1) edge [bend right] node [below] {} (ready)
        (ready) edge [bend right] node [below] {\color{blue}0\color{red}1} (send2)
        (send2) edge [bend right] node [above] {} (ready)
        ;
    \end{tikzpicture}
    \caption{$\blue\to$Port $1$, $\red\to$Port $2$.}
    \label{subfig:n1-p1-n2-p2}
\end{subfigure}
\hfill
\begin{subfigure}{0.45\textwidth}
\centering
\begin{tikzpicture} [auto]
    \node (ready) [state, initial below] {$ready$};
    \node (send1) [state, left = of ready, color = red] {$send_1$};
    \node (send2) [state, right = of ready, color = blue] {$send_2$};
    \path [-to]
        (ready) edge [loop above] node {\color{blue}0\color{red}0} ()
        (ready) edge [bend right] node [above] {\color{blue}0\color{red}1\color{black},\color{blue}1\color{red}1} (send1)
        (send1) edge [bend right] node [below] {} (ready)
        (ready) edge [bend right] node [below] {\color{blue}1\color{red}0} (send2)
        (send2) edge [bend right] node [above] {} (ready)
        ;
    \end{tikzpicture}
    \caption{$\blue\to$Port $2$, $\red\to$Port $1$.}
    \label{subfig:n1-p2-n2-p1}
\end{subfigure}
\caption{Router game from \cref{xmp:process symmetry}.
The players are $\blue$ and $\red$, and the router has two ports $1,2$.
In every round each player can try to \emph{send} (action 1), or \emph{wait} (action 0).
The labels on the edges describe the actions of the players. The first is the action of the $\blue$ player, and the second is the action of the $\red$ player.
From $ready$, if only the player in Port $i\in \{1,2\}$ tries to send, the game transitions to $send_i$. If both players try to send, the router prioritizes the request from Port $1$. 
The objective of the player Port $i$ is to visit $send_i$ infinitely many times. Note that $send_i$ is colored according to the player that tries to reach it in each port assignment.
}
\label{fig:router-game}
\end{figure}
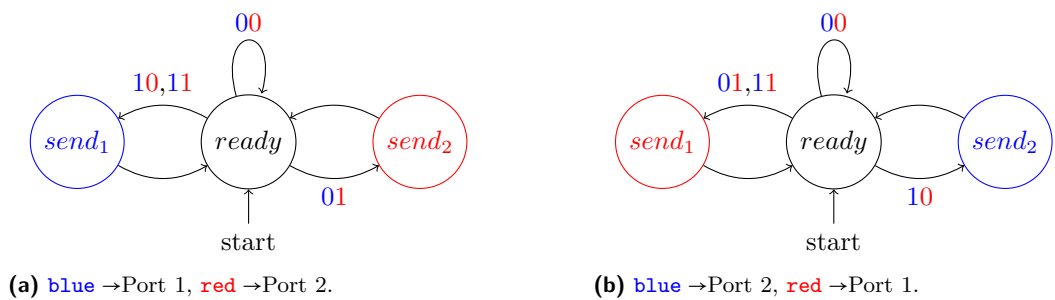

These type of settings are commonly referred to as \emph{process symmetry}~\cite{clarke1996exploiting,emerson1996symmetry,ip1993better,lin2016regular,almagor2020process}, and have been studied in several contexts (e.g., model checking with symmetry reductions). However, to our knowledge this setting has not been studied in games.
In \cref{sec:symmetric_games} we demonstrate how MTGs can model the general setting of process symmetry in games. \hfill \qed
\end{example}

In an MTG, a strategy for a player maps sequences of states to an action, and hence does not depend on a certain topology. Unlike standard games, a strategy profile in an MTG no longer induces a single trace, but rather a set of traces, one per topology. 
Thus, a player can no longer be said to be ``winning'' or ``losing'' in a strategy profile, as this may vary between topologies. In particular, it is not clear how analogues of Nash equilibrium and social optimum should be defined.

To this end, we propose two versions of Nash equilibria, corresponding to two extremities: in a Conservative NE (CNE), a player deviates if she can increase (w.r.t. containment) the set of topologies she wins in. In a Greedy NE (GNE), a player deviates if she can win in a currently-losing topology (even at the cost of losing some of the currently-winning topolgies).

We study the properties of CNE and GNE and compare their strictness, showing that a GNE is also a CNE, but the converse does not hold. We also compare their properties to those of the standard notion of NE.
Our main technical contribution is showing that the problem of whether a game has a CNE (resp. GNE) is decidable.

\subparagraph*{Related Work}
A central work concerning NE in concurrent games is~\cite{bouyer2015pure}, where the problem of deciding whether a concurrent game admits an NE was studied for various winning conditions. Apart from establishing tight complexity bounds, this work also introduced the \emph{suspect game} -- a useful technique for reasoning about concurrent games. 
Interestingly, the suspect game does not seem to be adaptable to reason about MTGs, suggesting a fundamental difference between the models.

Zero-sum concurrent reachability games were studied in~\cite{de2007concurrent}, where fundamental techniques for reasoning about them were developed. We remark that the zero-sum setting is technically very different to ours, due to the non-adversarial nature of the players.

Concurrent games can be formulated in the turn-based setting using partial information. The latter were extensively studied, e.g., in~\cite{chatterjee2010complexity, raskin2007algorithms, chatterjee2014games, berthon2021strategy, brenguier2017admissibility, degorre2010energy}, typically in the zero-sum setting.

Finally, the work in~\cite{berthon2021strategy} extends strategy logic~\cite{chatterjee2010strategy} with imperfect information. The authors show that, in general, the model checking problem for this logic is undecidable, but it is decidable in some special cases. Unfortunately, these cases do not readily capture MTGs.

\subparagraph*{Paper organization}
In \cref{sec:preliminaries} we present the basic definitions of concurrent games.
In \cref{sec:MTG} we formally define MTGs, introduce two notions of equilibria for them, and study their properties.
In \cref{sec:solving CNE} we give our main technical result, establishing the decidability of detecting CNE in MTGs. In \cref{sec:solving GNE} we establish the decidability of detecting GNE. 
Finally, in \cref{sec:discussion} we discuss our results and some extensions, and detail future directions.

\section{Preliminaries}
\label{sec:preliminaries}
A \emph{concurrent parity game} is a tuple $\game = \concurrentGame$ where the components are as follows.  
$\players$ is a finite set of players,
$\states$ is a finite set of states,
$s_0\in \states$ is an initial state,
$\actions$ is a finite set of actions. The transition function
$\transFunc: \states \times \actions^\players \to \states$ maps a state and an \emph{action profile} (i.e., $\vec{a} = (a_p)_{p\in \players}\in \actions^\players$) to the next state.
Every player $p\in\players$ has a parity objective $\objective_p \subseteq \states^\omega$, as we describe below.

A \emph{play} of $\game$ is an infinite sequence of states $\rho = s_0, s_1,\ldots \in \states^\omega$ such that for every step $i\in \nat$ there exists an action profile $\vec{a}$ such that $s_{i+1} = \transFunc(s_i,\vec{a})$.
For $k \geq 1$ we denote the length-$k$ prefix of $\rho_{\leq k} = s_0,\ldots,s_{k-1} \in \states^+$.
We denote by $\Inf(\rho)$ the set of states that occur infinitely often in $\rho$.
A \emph{parity objective} is given by a function $\Omega:\states\to \{0,\ldots,d\}$ for some $d\in \nat$. Then, $\rho$ satisfies the objective if $\min\{\Omega(s)\mid s\in \Inf(\rho)\}$ is even. Thus, the objective $\objective_p$ is the set of all plays that satisfy the parity function of Player $p$. In the following, we mostly use the parity function implicitly, and so we do not include $\Omega$ in the description of $\game$. 

The \emph{description size} of $\game$, denoted $|\game|$ is the number of bits required to represent the components of $\game$.
\begin{remark}[Game representation]
\label{rmk:transition_representation}
Note that we assume an explicit representation of the transition function as a table. In particular, we describe for every state the transition on every action profile in $\actions^\players$. Thus, the size of the transition functions is exponential in $|\players|$.

This is in contrast with a more succinct representation, i.e., representing the transition function as a circuit. We choose this focus to eliminate the complexity effect of succinct representation.
\end{remark}

A \emph{history} of $\game$ is a finite prefix of a play $h\in \states^+$.
A \emph{strategy} for Player $p$ is a function $\sigma : \states^+ \to \actions$ that maps a history to the next action of Player $p$. A \emph{strategy profile} $\vec{\sigma} = (\sigma_p)_{p\in \players}$ is vector of strategies, one for each player.
We denote the set of all strategies by $\strategies{}{\game}$ and the set of all strategy profiles by $\strategies{\players}{\game}$ (we omit the subscript $\game$ when it is clear from context).
A strategy profile $\vec{\sigma}$ can be thought as a function that maps histories to action profiles: given a history $h \in \states^+$ we have  $\vec{\sigma}(h) = (\sigma_p(h))_{p\in \players} \in \actions^\players$. 

For a strategy profile $\vec{\sigma}$ we define its \emph{outcome} to be the infinite sequence of states (i.e. play) in $\game$ that is taken when all the players follow their strategies in $\vec{\sigma}$. Formally,  $\outcome[\game]{\vec{\sigma}} = s_0 s_1 \ldots \in \states^\omega$ where $s_0$ is the initial state, and for every $i\ge 1$ we have $s_i=\transFunc(s_{i-1},\vec{\sigma}(s_0,\ldots, s_{i-1}))$. 
Consider a play $\rho\in \states^\omega$. The set of \emph{winners} in $\rho$ is the set of players whose objectives are met in $\rho$. Formally, 
$\winners[\game]{\rho} = \{p \in \players \mid \rho \in \objective_p\} \subseteq \players$. The set of winners in a strategy profile $\vec{\sigma}$ is then $\winners[\game]{\vec{\sigma}} = \winners[\game]{\outcome[\game]{\vec{\sigma}}}$.
Player $p$ is said to be \emph{losing} if she is not winning.

\begin{remark}[Action visibility]
\label{rmk:action_visibility}
Note that strategies are defined to ``see'' only the history of visited states, and not the history of actions taken by the other players. This is a standard and natural assumption~\cite{bouyer2015pure,chatterjee2014games} for concurrent models. 
There are, however, works (e.g.,~\cite{almagor2015repairing}) where players can view the entire action history. The latter approach is slightly easier to reason about, as players have full information on the game progress.
\end{remark}

A strategy profile $\vec{\sigma}$ is a \emph{Nash Equilibrium (NE)}  if, intuitively, no single player can benefit from unilaterally changing her strategy. Since the objectives in our setting are binary, ``benefiting'' amounts to moving from the set of losers to the set of winners. We refer to such a change as a \emph{beneficial deviation}. 
Formally, consider a strategy profile $\vec{\sigma}$, a player $p\in \players$ and a strategy $\sigma_p' \in \strategies{}{\game}$ for Player $p$. We denote by $\substitute{\vec{\sigma}}{p}{\sigma_p'}\in \strategies{\players}{}$ the strategy profile obtained from $\vec{\sigma}$ by replacing $\sigma_p$ with $\sigma'_p$. 
Then, $\vec{\sigma}$ is an NE if for every player $p\in \players$ and every strategy $\sigma_p' \in \strategies{}{\game}$ for Player $p$, if  $p \in \winners[\game]{\substitute{\vec{\sigma}}{p}{\sigma_p'}}$ then $p \in \winners[\game]{\vec{\sigma}}$. Viewed contrapositively: if $p$ loses when $\game$ is played with $\vec{\sigma}$, then $p$ also loses after changing her strategy.

\section{Multi-Topology Games}
\label{sec:MTG}
A \emph{multi-topology game (MTG)} is a tuple $\game=\multiTopologyGame$ where $\players$, $\states$, $s_0$, $\actions$, are the same as in concurrent games. $\topologies$ is a finite set of \emph{topologies}, and for every $t\in \topologies$ we have a transition function $\transFunc_t : \states \times \actions^\players \to \states$ and objective $\objective_{t,p}\subseteq \states^\omega$ for every player $p\in \players$. 
An MTG can be thought of as a tuple of games over the same states, players and actions. That is, for $t\in \topologies$, we can define  $\game_t=\tup{\players,\states,s_0,\actions,\transFunc_t,(\objective_{t,p})_{p\in \players}}$ to be  the concurrent parity game obtained by fixing the transition function to $\transFunc_t$ and the objective for Player $p$ to $\objective_{t,p}$.

Crucially, the players are assumed to have no a-priori information on which topology is selected when the game is played. This is captured in the definition of strategies: a strategy for Player $p$ is identical to the setting of concurrent parity games, i.e., $\sigma_p:\states^+\to \actions$. This lifts to strategy profiles and outcomes, as per \cref{sec:preliminaries}. In particular, a strategy $\sigma$ in $\game$ can be applied to $\game_t$ for every $t\in \topologies$.
Consider a strategy profile $\vec{\sigma}\in \strategies{\players}{}$. The \emph{winning topologies} of Player $p$ is the set of topologies that Player $p$ wins in when $\game$ is played with strategy profile $\vec{\sigma}$. Formally, 
$\wintop[\game]{p}{\vec{\sigma}} = \{
t \in \topologies \mid 
p \in \winners[\game_t]{\vec{\sigma}}\}$.

\subsection{Process Symmetry in Concurrent Games}
\label{sec:symmetric_games}
As we discuss in \cref{sec:intro}, a central motivation for MTGs come from settings where players plug in to the system without knowing their identity. This setting is commonly referred to as \emph{process symmetry}~\cite{clarke1996exploiting,emerson1996symmetry,ip1993better,lin2016regular,almagor2020process}.
Symmetry in games was studied in~\cite{tohme2019structural, stein2011exchangeable, brandt2011equilibria, ham2013notions} for \emph{strategic form games}, which are games with a single turn. 
In~\cite{bouyer2017nash, vester2012symmetric}, symmetry in concurrent games was studied by imposing restrictions on the game structure.
We consider a different setting, where processes $1,\ldots, k$ log into a system described as a concurrent game, but the index of the action controlled by each process is not revealed to the processes. This setting is naturally modelled as an MTG, as follows.

Consider a concurrent game $\game=\concurrentGame$ with $k \geq 2$ players, and that $\players = \{1,\ldots,k\}$. We obtain from $\game$ an MTG with $k!$ topologies by letting each topology correspond to a different permutation of the players. Formally, consider a permutation $\pi \in \mathcal{S}_k$, were $\mathcal{S}_k$ is the set of permutations over $\{1,\ldots,k\}$. For an action profile $\vec{a}\in \actions^\players$ we define $\pi(\vec{a}) = (a_{\pi^{-1}(1)},\ldots,a_{\pi^{-1}(k)})$. That is, the action performed by Player $i$ is taken at index $\pi(i)$. We now obtain the MTG $\game' = \tup{\players,\states,s_0,\actions,\mathcal{S}_k,(\transFunc_\pi)_{\pi\in \mathcal{S}_k},(\alpha_{\pi,p})_{\pi\in\mathcal{S}_k,p\in \players}}$
where $\mathcal{S}_k$ is the set of topologies, 
$\transFunc_\pi$ is obtained by applying $\pi$ to the action profile of the players, that is, for $s\in \states$ and $\vec{a} \in \actions^\players$ we have $\transFunc_\pi(s,\vec{a})=\transFunc(s,\pi(\vec{a}))$.
Finally, the objective of Player $p$ is $\alpha_{\pi,p}=\alpha_{\pi(p)}$.  \cref{fig:router-game} is an example of such game.

\subsection{Solution Concepts}
Recall that in NE, a beneficial deviation moves a player from losing to winning. In MTGs, however, winning is no longer binary. Indeed, a strategy profile associates with each player a set of winning topologies. Thus, the meaning of ``beneficial deviation'' becomes context dependent. We introduce and study two notions of equilibria for MTGs that lie on two ``extremities'': in the \emph{conservative} approach, a deviation is beneficial if it strictly increases (w.r.t. containment) the set of winning topologies. In the \emph{greedy} approach, a deviation is beneficial if a previously-losing topology becomes winning. We now turn to formally define and demonstrate these notions.

\subparagraph*{Conservative NE}
A \emph{conservative NE (CNE)} is a strategy profile $\vec{\sigma}$ where no player can deviate from $\vec{\sigma}$ and have her winning topologies be a strict superset\footnote{we emphasize that the relation $\subsetneq$ means ``strictly contained''.} of her winning topologies when obeying $\vec{\sigma}$.
Formally, $\vec{\sigma}\in \strategies{\players}{}$ is a CNE if the following holds:
\[
\begin{split}
    \forall p \in \players\ 
    \forall \sigma_p' \in \strategies{p}{\game}\ 
    (&(\forall t\in \topologies\  
        p \in \winners[\game_t]{\substitute{\vec{\sigma}}{p}{\sigma_p'}} \to \winners[\game_t]{\vec{\sigma}})
    \lor \\
    & (\exists t\in \topologies\ 
        p \notin \winners[\game_t]{\substitute{\vec{\sigma}}{p}{\sigma_p'}} \land 
        p \in 
        \winners[\game_t]{\vec{\sigma}})
    )
\end{split}
\]
Equivalently, this condition can be written in terms of the set of winning topologies:
\[
    \forall p\in \players\ \forall \sigma_p'\in \strategies{p}{\game}\  \neg ( 
    \wintop[\game]{p}{\vec{\sigma}} 
    \subsetneq
    \wintop[\game]{p}{\substitute{\vec{\sigma}}{p}{\sigma_p'}})
\]
 
We refer to this notion as \emph{conservative} since a deviating player wants to conserve her existing winning strategies.

\subparagraph*{Greedy NE}
A \emph{greedy NE (GNE)} is a strategy profile $\vec{\sigma}$ where no player can unilaterally deviate and win in a previously-losing topology. Formally, $\vec{\sigma}\in \strategies{\players}{}$ is a GNE if the following holds:
\[
    \forall p\in \players\ \forall \sigma_p'\in \strategies{p}{\game}\  \forall t\in \topologies\  (p\in \winners[\game_t]{\substitute{\vec{\sigma}}{p}{\sigma_p'}} \to p\in \winners[\game_t]{\vec{\sigma}})
\]
Equivalently, this condition can also be written in terms of the set of winning topologies:
\[
\label{eq:GNE-definition-wintop}
    \forall p\in \players\ 
    \forall \sigma_p'\in \strategies{p}{\game}\ 
    (\wintop[\game]{p}{\substitute{\vec{\sigma}}{p}{\sigma_p'}}
    \subseteq 
    \wintop[\game]{p}{\vec{\sigma}})
\]
The latter formulation shows that in a GNE, for every player and for every deviation, the player's winning topologies when deviating are a subset of the player's winning topologies when obeying $\vec{\sigma}$.
It refer to this notion as \emph{greedy} since it assumes that a player deviates if she improves her outcome in a single topology, disregarding the outcome in other topologies. 

\begin{example}[CNE and GNE]
\label{example:CNE and GNE}
Recall the router game from \cref{fig:router-game}. The strategy profile where Player $\blue$ repeatedly plays $(0,0,1,1)^\omega$ and $\red$ plays $(1,1,0,0)^\omega$ is a CNE, since the set of winning topologies of this profile is $\{1,2\}$ for both players. Thus, no deviation can win in strictly more topologies.

Note that the same strategy profile is also a GNE, since every set of winning topologies is a subset of $\{1,2\}$.
\end{example}

\begin{remark}[Additional notions of NE]
\label{remark:other-NE-notions}
CNE and GNE are based on the $\subseteq$ preorder on the sets of topologies, $2^\topologies$. In \cref{sec:discussion} we discuss other notions of NE in MTGs.
\end{remark}

\subsection{Properties of  CNE and GNE}
We start by examining some properties and relationships between the notions of CNE and GNE, as well as their relation to standard NE.

Consider an MTG $\multiTopologyGame$. The following observation is immediate from the definitions of GNE and CNE, since if there is only a single topology, the MTG collapses into a concurrent game.
\begin{observation}
\label{observation:single-topology}
If $\topologies=\{t\}$, i.e. there is only a single topology $t$, then the definitions of NE in $\game_t$ coincides with that of CNE and of GNE in $\game$.
\end{observation}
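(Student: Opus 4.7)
The plan is to observe that when $\topologies = \{t\}$ is a singleton, both the CNE and GNE conditions degenerate syntactically into the standard NE condition for $\game_t$. So the proof reduces to a careful unpacking of the three definitions side-by-side, noting how quantifiers and subset comparisons collapse.

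For GNE, the formula
\[
\forall p\in \players\ \forall \sigma_p'\in \strategies{p}{\game}\ \forall t'\in \topologies\ (p\in \winners[\game_{t'}]{\substitute{\vec{\sigma}}{p}{\sigma_p'}} \to p\in \winners[\game_{t'}]{\vec{\sigma}})
\]
has the innermost universal quantifier range over the singleton $\{t\}$, which is trivial. What remains is exactly the NE condition for the concurrent parity game $\game_t$.

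For CNE, the key remark is that for any strategy profile $\vec{\tau}$, the set $\wintop[\game]{p}{\vec{\tau}}\subseteq\{t\}$ is either $\emptyset$ or $\{t\}$. Thus the strict containment
\[
\wintop[\game]{p}{\vec{\sigma}} \subsetneq \wintop[\game]{p}{\substitute{\vec{\sigma}}{p}{\sigma_p'}}
\]
holds if and only if $\wintop[\game]{p}{\vec{\sigma}}=\emptyset$ and $\wintop[\game]{p}{\substitute{\vec{\sigma}}{p}{\sigma_p'}}=\{t\}$, i.e. $p\notin \winners[\game_t]{\vec{\sigma}}$ and $p\in \winners[\game_t]{\substitute{\vec{\sigma}}{p}{\sigma_p'}}$. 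Negating this for all $p$ and $\sigma_p'$ gives precisely the contrapositive form of NE in $\game_t$: no player can move from losing to winning by a unilateral deviation. Hence CNE, GNE, and NE coincide in the single-topology case. There is no real obstacle here; the whole statement is a syntactic collapse of quantifiers and of the $2^\topologies$ lattice to the two-element chain $\{\emptyset,\{t\}\}$.
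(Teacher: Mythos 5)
Your proposal is correct and matches the paper's treatment: the paper simply asserts the observation is immediate from the definitions because a single-topology MTG collapses into a concurrent game, and your argument is exactly that collapse spelled out (the GNE quantifier over $\topologies$ trivializes, and the CNE containment order on $2^{\topologies}$ reduces to the two-element chain $\{\emptyset,\{t\}\}$, recovering the losing-to-winning formulation of NE). Nothing is missing.
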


Next, we observe that GNE is a stricter notion than CNE. Indeed, a beneficial deviation in the conservative setting (namely increasing the set of winning topologies) implies a beneficial deviation in the greedy setting (namely winning in a previously-losing topology). Contrapositively, if there is no greedy beneficial deviation, there is also no conservative beneficial deviation. We thus have the following.
\begin{observation}
\label{observation:GNE-implies-CNE}
Let $\game$ be an MTG. If $\vec{\sigma}$ is a GNE in $\game$ then $\vec{\sigma}$ is a CNE in $\game$. 
\end{observation}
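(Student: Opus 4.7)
The plan is to prove the observation directly from the equivalent set-based formulations of the two equilibrium notions given in the excerpt. The argument is essentially a one-line set-theoretic manipulation, so the ``proof'' is really just unpacking the definitions in the right order.

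Concretely, I would fix an MTG $\game$, a GNE $\vec{\sigma}$, an arbitrary player $p\in\players$, and an arbitrary deviation strategy $\sigma_p'\in \strategies{p}{\game}$. By the equivalent formulation of GNE given just after its definition, we have
\[
    \wintop[\game]{p}{\substitute{\vec{\sigma}}{p}{\sigma_p'}} \subseteq \wintop[\game]{p}{\vec{\sigma}}.
\]
But for any two sets $A,B$, the inclusion $A\subseteq B$ immediately precludes the strict inclusion $B\subsetneq A$. Instantiating this with $A = \wintop[\game]{p}{\substitute{\vec{\sigma}}{p}{\sigma_p'}}$ and $B = \wintop[\game]{p}{\vec{\sigma}}$ gives
\[
    \neg\bigl(\wintop[\game]{p}{\vec{\sigma}} \subsetneq \wintop[\game]{p}{\substitute{\vec{\sigma}}{p}{\sigma_p'}}\bigr),
\]
which is exactly the CNE condition for $p$ and $\sigma_p'$. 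Since $p$ and $\sigma_p'$ were arbitrary, $\vec{\sigma}$ is a CNE.

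There is essentially no obstacle here: the work has already been done by choosing the equivalent set-based formulations in the definitions of CNE and GNE. The only thing to be careful about is that one should state the contrapositive reading explicitly (as in the intuition right before the observation) so the reader sees why the implication goes from GNE to CNE and not the other way around, and to note afterwards that the converse fails in general (which will presumably be witnessed by an example later in the paper, so no further argument is needed here).
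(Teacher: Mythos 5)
Your proof is correct and matches the paper's own argument: the paper justifies the observation by exactly this contrapositive reasoning (a conservative beneficial deviation implies a greedy one, so absence of greedy deviations implies absence of conservative ones), which is the same one-line set-theoretic fact that $A\subseteq B$ precludes $B\subsetneq A$ applied to the winning-topology sets.
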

The following example shows that the implication of \cref{observation:GNE-implies-CNE} is strict. That is, there are MTGs with a CNE but without a GNE.

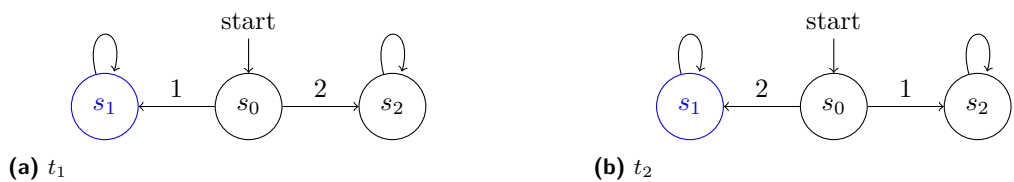
\begin{figure}[ht]
\centering
\begin{subfigure}{0.45\textwidth}
\centering
\begin{tikzpicture} [auto]
    \node (s0) [state, initial above] {$s_0$};
    \node (s1) [state, left = of s0, color = blue] {$s_1$};
    \node (s2) [state, right = of s0] {$s_2$};
    \path [-to]
        (s0) edge node [above] {1} (s1)
        (s1) edge [loop above] node {} ()
        (s0) edge node [above] {2} (s2)
        (s2) edge [loop above] node {} ()
        ;
    \end{tikzpicture}
    \caption{$t_1$}
\end{subfigure}
\hfill
\begin{subfigure}{0.45\textwidth}
\centering
\begin{tikzpicture} [auto]
    \node (s0) [state, initial above] {$s_0$};
    \node (s1) [state, left = of s0, color = blue] {$s_1$};
    \node (s2) [state, right = of s0] {$s_2$};
    \path [-to]
        (s0) edge node [above] {2} (s1)
        (s1) edge [loop above] node {} ()
        (s0) edge node [above] {1} (s2)
        (s2) edge [loop above] node {} ()
        ;
    \end{tikzpicture}
    \caption{$t_2$}
\end{subfigure}
\caption{A single player MTG with two topologies, $t_1$ and $t_2$. In both topologies, the objective of the player is to reach $s_1$ (it is easy to capture this using a parity objective).}
\label{fig:single-agent-CNE-without-GNE}
\end{figure}

\begin{example}[CNE without GNE]
\label{example:CNE without GNE}
Consider the single-player game depicted in \cref{fig:single-agent-CNE-without-GNE}.
The outcome of the game depends only on the first action that the player takes and the topology that the game is played in. 
If the player takes action 1, then the set of winning topologies is $\{t_1\}$.
If the player takes action 2, then the set of winning topologies is $\{t_2\}$.
Since $\{t_1\} \not\subseteq \{t_2\}$ and $\{t_2\} \not\subseteq \{t_1\}$, there is no GNE in the game, as the player can switch strategies from $t_1$ to $t_2$ and vice versa to win in a previously-losing topology.

However, since there is no strategy for the player such that the set of winning topologies is $\{t_1,t_2\}$ (the only strict superset of $\{t_1\}$ and $\{t_2\}$), then every strategy is a CNE.
\end{example}
\begin{remark}[Best-response dynamics in GNE]
\label{rmk:GNE_non_monotonic}
\cref{example:CNE without GNE} demonstrates that, in stark contrast to NE, an MTG might not have a GNE even when there is only a single player. This has to do, in particular, with the notion of \emph{best-response dynamics}: in standard games, one can approach an NE by starting from some profile, and repeatedly letting players deviate to their best-response strategy, until this process converges. While this does not always converge, it does so for a large class of games (e.g., \emph{finite-potential games}~\cite{nisan2007algorithmic}).

Thus, \cref{example:CNE without GNE} shows that best-response does not converge even for a single player in MTGs, whereas it does converge for a single player both for standard NE, as well as in CNE for MTGs. Indeed, the best-response of a single player in the conservative setting will increase her set of winning topologies to the maximum, and from there she will no longer have incentive to deviate.
\end{remark}

\cref{rmk:GNE_non_monotonic} reflects the intuition that a GNE must be stable in each topology separately. That is, it captures the notion ``NE on all topologies'', in the following sense.
\begin{observation}
\label{obs:GNE is NE on all}
A GNE $\vec{\sigma}$ is also an NE in $\game_t$ for every $t\in \topologies$.
\end{observation}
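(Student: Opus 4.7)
The plan is to show that the GNE condition, when specialized to a fixed topology $t$, gives exactly the NE condition for the concurrent game $\game_t$. The key preliminary observation is that the strategy spaces coincide: in both the MTG $\game$ and the concurrent game $\game_t$, a strategy for Player $p$ is a function $\states^+ \to \actions$, since $\game$ and $\game_t$ share the same sets $\players$, $\states$, $s_0$, and $\actions$. Thus any strategy $\sigma_p'$ available to Player $p$ in $\game_t$ is also available to her as a deviation in $\game$, and vice versa, so $\strategies{}{\game} = \strategies{}{\game_t}$.

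With this in hand, I would fix an arbitrary $t \in \topologies$ and show that $\vec{\sigma}$, viewed as a strategy profile in $\game_t$, satisfies the NE condition. Pick any player $p \in \players$ and any strategy $\sigma_p' \in \strategies{}{\game_t} = \strategies{}{\game}$, and assume $p \in \winners[\game_t]{\substitute{\vec{\sigma}}{p}{\sigma_p'}}$. The defining property of a GNE, instantiated at this particular player $p$, deviation $\sigma_p'$, and topology $t$, immediately yields $p \in \winners[\game_t]{\vec{\sigma}}$. This is precisely the NE condition for $\game_t$ as stated in \cref{sec:preliminaries}.

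There is no real obstacle here: the statement is essentially a definitional unfolding, made possible by the fact that the GNE definition quantifies universally over topologies and that strategies in an MTG are topology-oblivious by design. The only subtlety worth flagging in the writeup is the identification of $\strategies{}{\game}$ with $\strategies{}{\game_t}$, which is what ensures that the set of ``available deviations'' is the same in both games and hence that the implication lifts verbatim.
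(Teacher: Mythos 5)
Your proof is correct and matches the paper's argument: the paper states the same definitional unfolding contrapositively (a deviation witnessing failure of NE in $\game_t$ would be a greedy beneficial deviation in $\game$), while you instantiate the GNE condition directly at the fixed topology $t$. Your explicit remark that $\strategies{}{\game} = \strategies{}{\game_t}$ is a reasonable point to make, and the paper leaves it implicit.
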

Indeed, if $\vec{\sigma}$ was not an NE in $\game_t$ for some $t\in \topologies$, then a player that deviates from $\vec{\sigma}$ in $\game_t$ would similarly deviate from $\vec{\sigma}$ in $\game$, greedily winning in the previously-losing topology $t$.

In contrast, we now show that CNE is a more intricate notion, and might hold even when there is no NE in the separate topologies. 

\begin{figure}[ht]
\centering
\begin{subfigure}{0.45\textwidth}
\centering
\begin{tikzpicture} [auto]
    \node (s0) [state, initial above] {$s_0$};
    \node (s1) [state, left = of s0, color = blue] {$s_1$};
    \node (s2) [state, right = of s0, color = red] {$s_2$};
    \path [-to]
        (s0) edge node [above] {00,11} (s1)
        (s1) edge [loop above] node {} ()
        (s0) edge node [above] {01,10} (s2)
        (s2) edge [loop above] node {} ()
        ;
    \end{tikzpicture}
    \caption{$t_1$}
\end{subfigure}
\hfill
\begin{subfigure}{0.45\textwidth}
\centering
\begin{tikzpicture} [auto]
    \node (s0) [state, initial above] {$s_0$};
    \node (s1) [state, left = of s0, color = red] {$s_1$};
    \node (s2) [state, right = of s0, color = blue] {$s_2$};
    \path [-to]
        (s0) edge node [above] {00,11} (s1)
        (s1) edge [loop above] node {} ()
        (s0) edge node [above] {01,10} (s2)
        (s2) edge [loop above] node {} ()
        ;
    \end{tikzpicture}
    \caption{$t_2$}
\end{subfigure}
\caption{Symmetric XOR game.
The players are $\blue$ and $\red$.
In topology $t_1$, the objective of $\blue$ is to reach $s_1$, and the objective of $\red$ is to reach $s_2$.
In topology $t_2$ the objectives of the players are swapped.
The game starts from $s_0$.
If both players take the same action, then the game transitions to state $s_1$ and gets stuck there. If the players take different actions then the game transitions to $s_2$ and gets stuck there.}
\label{fig:symmetric-XOR-game}
\end{figure}
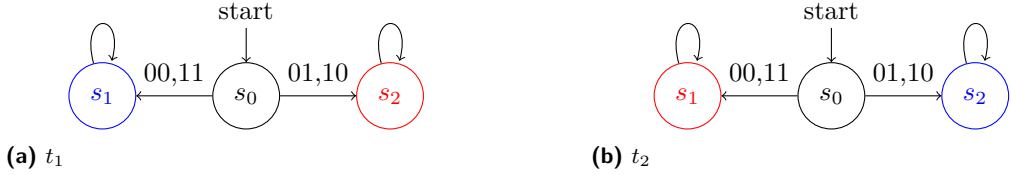

\begin{example}[CNE without NE]
Consider the Symmetric XOR game $\game$ depicted in \cref{fig:symmetric-XOR-game}.
Note that neither $\game_{t_1}$ nor $\game_{t_2}$ have a NE, since if a strategy for a single player is fixed, the other player can respond to it and win. 

On the other hand, any strategy profile is a CNE, since every player always wins in exactly one topology. Thus, there is no way for a player to deviate and get strict superset of winning topologies. 
\end{example}
There are MTGs without CNE. For example, every concurrent game $\game$ without an NE can be viewed as an MTG with a single topology $t_1$. Since there is no NE in $\game$, then for every profile $\vec{\sigma}$ there exists a player $p$ that loses with $\vec{\sigma}$, which corresponds to  $\wintop[\game]{p}{\vec{\sigma}}=\emptyset$ but $p$ can deviate and win $\game$, which corresponds to $\wintop[\game]{p}{\substitute{\vec{\sigma}}{p}{\sigma_p'}}=\{t_1\}$. Since $\emptyset \subsetneq \{t_1\}$, then $\vec{\sigma}$ is not a CNE.

\section{Existence of Conservative NE is Decidable}
\label{sec:solving CNE}
We now turn to our main technical contribution -- showing that the existence of a CNE is a decidable property.

\begin{theorem}
\label{thm:CNE_decidable}
The problem of deciding, given an MTG $\game$, whether there exists a CNE in $\game$ is in 2-EXPTIME.
\end{theorem}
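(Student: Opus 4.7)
The plan is to reduce the existence of a CNE to the emptiness of an alternating parity tree automaton. The first step is to reformulate the CNE condition: a profile $\vec{\sigma}$ with winning sets $W_p=\wintop^p(\vec{\sigma})$ is a CNE iff for every player $p$ and every $t\in\topologies\setminus W_p$, there is no deviation $\sigma_p'$ such that, against $\vec{\sigma}_{-p}$, player $p$ wins in every topology of $W_p\cup\{t\}$. This follows because a deviation yields a strict superset of $W_p$ exactly when it preserves all wins in $W_p$ and adds at least one new winning topology $t$.

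Next, I would guess the tuple $(W_p)_{p\in\players}\in(2^{\topologies})^\players$, paying a factor $2^{|\topologies|\cdot|\players|}$. Strategy profiles are encoded as labelings of the infinite $|\states|$-ary tree rooted at $s_0$, where each node $h\in\states^+$ is labeled by $\vec{\sigma}(h)\in\actions^\players$. The outcome in topology $t$ is then the unique path $s_0 s_1 s_2 \ldots$ with $s_{i+1}=\transFunc_t(s_i,\vec{\sigma}(s_0\ldots s_i))$. The task is to find a labeling that witnesses a CNE with the guessed $(W_p)$.

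I would then build an alternating parity tree automaton $\mathcal{A}$ over these labelings, as a conjunction of two kinds of sub-automata. One sub-automaton verifies condition (A): for each topology $t$, the outcome in $\game_t$ satisfies $\objective_{t,p}$ exactly when $t\in W_p$, which amounts to deterministically walking one path per topology through the tree. For each pair $(p,t)$ with $t\notin W_p$, a second sub-automaton verifies condition (B): for every deviation $\sigma_p'$ of $p$ against $\vec{\sigma}_{-p}$, some $t^*\in W_p\cup\{t\}$ has outcome violating $\objective_{t^*,p}$. This is implemented by universal branching over $p$'s deviation action $a_p'$ at each node (enumerating all $\sigma_p'$), combined with a \emph{dynamic} existential choice of the witness topology $t^*$. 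Realizing the dynamic choice forces the automaton to track, in product form, the current states in all topologies of $W_p\cup\{t\}$ until it commits to a single $t^*$, making $\mathcal{A}$ of size exponential in $|\game|$.

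Finally, the standard simulation converting $\mathcal{A}$ from alternating to non-deterministic parity tree automaton causes a single-exponential blowup, after which emptiness is polynomial-time checkable, yielding the 2-EXPTIME bound. The main obstacle lies in condition (B): the witness topology $t^*$ must be allowed to depend on $\sigma_p'$ — the correct quantifier order is $\forall \sigma_p'\,\exists t^*$, not the strictly stronger $\exists t^*\,\forall \sigma_p'$ — so one cannot fix $t^*$ upfront, nor use a simple disjunction over topologies. Handling this dynamic existential, while simultaneously requiring one profile $\vec{\sigma}$ to disarm all $(p,t)$ deviations at once, is the coupling that obstructs any direct suspect-game-style reduction (as noted in the introduction) and motivates the alternating tree automaton approach.
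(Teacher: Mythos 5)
Your route is genuinely different from the paper's. For each guessed tuple $(T_p)_{p\in\players}$ (the paper guesses these too), the paper reduces to a three-player partial-information game with $\eve$ proposing the profile, $\adam$ choosing the deviator $p$, a target set of the form $T_p\cup\{t\}$, and the deviation, and $\snake$ choosing the witness topology, and then invokes the $\exists\forall\exists$ decidability result of \cref{thm:krish result}. Your opening reformulation of the CNE condition (it suffices to block deviations that win on all of $W_p\cup\{t\}$ for a single new $t$) is exactly the paper's restriction of $\adam$'s choices to the family $\cT_p$, your dynamic witness $t^*$ plays the role of $\snake$, and your condition (A) replaces the paper's extra singleton sets $\{t\}$. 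What your approach buys, if completed, is a self-contained automata-theoretic argument that does not rest on the (heavy) machinery behind \cref{thm:krish result}; what it costs is that you must build the machinery yourself, and that is where the proposal is incomplete.

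The gap is in condition (B), which you correctly identify as the main obstacle but do not resolve. First, ``tracking, in product form, the current states in all topologies of $W_p\cup\{t\}$'' does not literally make sense on the history tree: once the plays of two topologies diverge under a deviation they occupy different nodes, and a single automaton copy cannot sit at both; and topologies whose plays have not yet diverged share a node and hence share the current state of $\game$, so there is no product of states to track, only the subset of still-candidate witnesses. The workable version keeps a bundle $S\subseteq W_p\cup\{t\}$ and, when the bundle splits across children, existentially chooses which part retains the witness. But this realizes a third quantifier pattern in which $t^*$ is committed to incrementally, from finite prefixes of $\sigma_p'$, and you must prove this equivalent to the required $\forall\sigma_p'\,\exists t^*$, where $t^*$ may depend on the entire infinite deviation. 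The equivalence does hold --- if every part of a split admitted an extension of the deviation making all its topologies win, those extensions live on disjoint subtrees and combine into one deviation winning on the whole bundle, contradicting the invariant --- but this is the crux of the construction and is nowhere argued; the paper sidesteps it by the quantifier order $\forall\sigma_\adam\,\exists\sigma_\snake$, which hands $\snake$ the whole deviation before the topology is chosen. Second, a copy whose bundle stabilizes at $S$ must accept iff \emph{some} $t^*\in S$ violates $\objective_{t^*,p}$ along its path; a disjunction of parity conditions is a Rabin condition, so the automaton is not a parity automaton as stated, and an extra construction (e.g.\ an index appearance record) with its own blowup must be folded into the complexity accounting before the 2-EXPTIME bound is established. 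Neither issue is fatal, but both must be filled in.
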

The remainder of the section is devoted to proving \cref{thm:CNE_decidable}. 
Our solution is based on a reduction to the problem of solving a restricted form of partial-information game.
We then employ a result from \cite{chatterjee2014games}, and obtain the complexity result by a careful analysis of the construction.
The rest of the section is organized as follows. In \cref{sec:partial_info} we present the model of partial-information games and the result of \cite{chatterjee2014games}. In \cref{sec:CNE reduction overview} we give an overview of the reduction and in \cref{sec:CNE reduction} we describe and analyze the reduction from our setting.

\subsection{Partial-Information Games}
\label{sec:partial_info}
Partial-information games (also known as \emph{games with incomplete information}) are a ubiquitous model for settings where the players cannot fully observe the state of the game due to e.g., private/hidden variables, unknown parameters or abstractions of part of the system.

Formally, a \emph{partial-information game} is a tuple $\game = \partialInformationGame$ where 
$\players$, $\states$, $s_0$, $\actions$ and 
$\transFunc$ are the same as in concurrent games.
For every player $p\in \players$, the set of \emph{observations} $\observations_p \subseteq 2^\states$ is a partition of $\states$. We omit the acceptance condition, and we will include it explicitly in \cref{thm:krish result} below. 

Intuitively, when the play of $\game$ is at state $s\in \states$, Player $p$ can only observe $o\in \observations_p$ such that $s\in o$, and needs to select an action according to $o$. 
Thus, we distinguish between \emph{state histories}, $\states^+$ and \emph{observation histories (of Player $p$)}, $(\observations_p)^+$.
For $s\in \states$ we define $\obs_p(s)=o\in \observations_p$ to be the unique observation of Player $p$ such that $s\in o$. 
We extend $\obs_p$ to histories: let $h = s_0 s_1 ... s_k \in \states^+$ be a state history, we define $\obs_p(h) = \obs_p(s_0) \obs_p(s_1),\ldots, \obs_p(s_k) \in (\observations_p)^+$ to be the corresponding observation history.

Strategies are \emph{observation based}, that is, a \emph{strategy} for Player $p$ is a function $\sigma_p : \observations_p^+ \to \actions$. Since different players may have different observation sets, we denote by $\strategies{p}{\game}$ the set of all strategies for Player $p$. We denote by $\strategies{\players}{\game}$ the set of all strategy profiles.

Similarly to concurrent games, a strategy profile $\vec{\sigma}$ can be thought of as a function that maps histories to action profiles $\vec{\sigma}(h) = (\sigma_p(\obs_p(h)))_{p\in \players} \in \actions^\players$, and we define $\outcome[\game]{\vec{\sigma}}\in \states^\omega$ similarly to concurrent games.

We say that Player $p\in \players$ has \emph{perfect information} if $\observations_p = \{\{s\} \mid s\in \states\}$. That is, Player $p$ can observe the exact state of the game.
If all players have perfect information then the game is a \emph{perfect information game}, and coincides with our definition of concurrent games. 
We say that \emph{Player $i$ is less informed than Player $j$} if $\observations_j$ is a refinement of $\observations_i$. That is, for every $o_j\in \observations_j$ there exists $o_i \in \observations_i$ such that $o_j \subseteq o_i$.

Finally, consider an objective $\alpha \subseteq \states^\omega$, we say that $\alpha$ is \emph{visible to Player $p$} if for every $\rho,\rho'\in \states^\omega$ such that $\obs_p(\rho)=\obs_p(\rho')$ we have that $\rho\in \alpha$ if and only if $\rho'\in  \alpha$. That is, the objective can be defined according to observation sequences rather than plays.

The following theorem is a result from \cite{chatterjee2014games} that will serve as the target of our reduction.  
\begin{theorem}
\label{thm:krish result}
Let $\game=\partialInformationGame$ be a partial information game, with $\players = \{1,2,3\}$ where Player 1 less informed than Player 2. Let $\objective \subseteq \states^\omega$ be parity objective over $\states$. 
The problem of deciding whether $\exists \sigma_1 \in \strategies{1}{\game}\  \forall \sigma_2 \in \strategies{2}{\game}\  \exists \sigma_3\in \strategies{3}{\game}\  \outcome[\game]{\sigma_1,\sigma_2,\sigma_3}\in \objective$ is 2-EXPTIME complete.
\end{theorem}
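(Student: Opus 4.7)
The plan is to establish the upper bound by a cascade of two belief-state constructions that turn $\game$ into an equivalent three-player perfect-information concurrent parity game whose state space is doubly exponential in $|\game|$, and then to decide the $\exists\forall\exists$ query on that game in time polynomial in its size, yielding 2-EXPTIME overall. For the lower bound, the plan is to reduce from the 2-EXPTIME-hard problem of solving two-player partial-information parity games in which one player is less informed than the other.

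For the first belief construction, I would augment each state $s\in\states$ with a component $B_1\subseteq\states$ representing the knowledge of Player 1 at that point, i.e., the set of original states consistent with her observation-and-action history. The update rule is the standard subset construction relative to $\obs_1$ and Player 1's chosen action. In the resulting game $\game^{(1)}$ of size $|\states|\cdot 2^{|\states|}$, Player 1's strategy may depend on $B_1$ and thus becomes effectively perfect-information. Crucially, because Player 1 is less informed than Player 2, Player 2's observation history already determines $B_1$, so Player 2 is also perfectly informed on $\game^{(1)}$. Player 3, however, remains partially informed. Applying an analogous subset construction for Player 3 on $\game^{(1)}$, augmenting with $B_3\subseteq \states\times 2^{\states}$, produces a doubly-exponential game $\game^{(2)}$ in which all three players are effectively perfectly informed. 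The parity objective lifts through the projection onto the original state coordinate.

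The $\exists\sigma_1\forall\sigma_2\exists\sigma_3$ query on $\game^{(2)}$ is then a three-player concurrent perfect-information parity game, which I would resolve as a nested game: for each candidate $\sigma_1$, the inner $\forall\sigma_2\exists\sigma_3$ is a two-player concurrent parity game between antagonist Player 2 and protagonist Player 3, and the outer $\exists\sigma_1$ reduces to synthesis on the value function of the inner game. Both layers are solvable in time polynomial in $|\game^{(2)}|$, giving the 2-EXPTIME bound. For the matching lower bound, I would reduce from the 2-EXPTIME-hard problem of deciding the winner of a two-player parity game with partial information in which one player is less informed than the other: given such a game $\game'$ with players $\{1,2\}$, build an instance $\game$ of our problem by keeping Players 1 and 2 with their original observations and objective, and adding a trivial Player 3 whose action set is a singleton and whose presence does not affect transitions. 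The $\exists\sigma_3$ quantifier is then vacuous, and the $\exists\forall\exists$ query reduces exactly to the winning condition for Player 1 in $\game'$.

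The main obstacle is expected to be the correctness of the composed subset construction --- in particular, verifying that Player 3 in $\game^{(2)}$ does not gain spurious information by tracking Player 1's belief $B_1$, so that her strategies in $\game^{(2)}$ capture exactly the observation-based strategies available to her in $\game$. A secondary technical subtlety is confirming that the $\exists\forall\exists$ query on the resulting concurrent perfect-information parity game indeed admits a polynomial-time algorithm in $|\game^{(2)}|$, which may require passing through alternating-tree-automaton emptiness or a suitable nested parity-game construction rather than a naive reduction, since concurrent parity games do not immediately collapse to turn-based ones.
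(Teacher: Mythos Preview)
This theorem is not proved in the present paper; it is quoted from~\cite{chatterjee2014games} and used as a black box. The paper does, however, sketch the structure of that argument in \cref{rmk:lower bounds}: (i) make the parity objective \emph{visible} to Player~2 via Safra-style determinization, (ii) reduce the three-player partial-information game to a \emph{two}-player partial-information game, and (iii) apply the subset construction of~\cite{raskin2007algorithms} to the remaining two-player game. Your plan is structurally different from this, and has two genuine gaps.

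\textbf{Lower bound.} Your reduction from two-player partial-information parity games cannot give 2-EXPTIME-hardness: as the paper itself records in \cref{thm:2 player partial information is EXPTIME-complete}, that problem is EXPTIME-complete. Padding it with a dummy Player~3 whose action set is a singleton yields only an EXPTIME lower bound. The 2-EXPTIME hardness in~\cite{chatterjee2014games} does not come from this source.

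\textbf{Upper bound.} The cascade of two belief constructions does not achieve what you claim. After your first step the state is $(s,B_1)$; Player~1 observes $B_1$ but still not $s$, so she is not ``effectively perfect-information'' --- the standard two-player subset construction \emph{drops} $s$ and lets the adversary resolve the nondeterminism within $B_1$, which is only sound in a zero-sum two-player setting and does not interact correctly with a third player here. More fundamentally, in the quantifier prefix $\exists\sigma_1\,\forall\sigma_2\,\exists\sigma_3$ the \emph{choice} of $\sigma_3$ may depend on the entire strategy $\sigma_2$, not just on an observation history; a belief set over states cannot encode that dependency, and you give no argument that the $\exists\forall\exists$ query on your final concurrent perfect-information game is even equivalent to the original one, let alone solvable in polynomial time. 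This is exactly why the cited proof first makes the objective visible to Player~2 and then eliminates Player~3 by collapsing the three-player game into a two-player one, rather than trying to give Player~3 perfect information via a second subset construction.
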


\subsection{Overview of the Reduction}
\label{sec:CNE reduction overview}
We now turn to describe a reduction from the CNE existence problem to the setting of \cref{thm:krish result}. 
We start with a high-level description. Consider an MTG $\game$. Instead of asking directly whether $\game$ admits a CNE, we first fix a set of ``intended'' winning topologies $T_p\subseteq \topologies$ for each player $p\in \players$. Then, we ask whether $\game$ admits a CNE $\vec{\sigma}$ in which $\wintop[\game]{p}{\vec{\sigma}}=T_p$ for every $p\in \players$. If we are able to answer the latter problem, we can iterate over every possible tuple $(T_p)_{p\in \players}$ (or nondeterministically guess a set) and conclude whether $\game$ admits a CNE. We remark that this approach is reminiscent of the technique in~\cite{bouyer2015pure}, where the existence of an NE in a game is decided by first guessing a ``witness'' path. 

Once the set of intended topologies is fixed, we construct a 3-player partial-information game whose players are $\eve, \adam$ and $\snake$, with the following roles:
\begin{itemize}
    \item $\eve$ controls the coalition of all players, and suggests a strategy profile $\vec{\sigma}$ by selecting the actions for all the players at each step.
    
    \item $\adam$ selects a deviating player $p$, and the deviating strategy $\sigma'_p$ for that player. In addition, $\adam$ selects a set $T\subseteq \topologies$ in which Player $p$ tries to win when playing $\sigma'_p$. 
    
    \item $\snake$ helps\footnote{It is arguable whether this matches the biblical interpretation. This work makes no theological claims.} $\eve$ by selecting a concrete topology $t$ from the set $T$ picked by $\adam$. 
\end{itemize}
The game starts with $\adam$ and $\snake$ choosing $p$, $T$ and $t\in T$. It then proceeds with $\eve$ and $\adam$ choosing $\vec{\sigma}$ and $\sigma'_p$, respectively, while playing on $\game_t$. The observation sets of the players are such that both $\eve$ and $\adam$ can only observe the current state of the game, so $\eve$ is ignorant of $p$, $T$ and $t$, and $\adam$ is ignorant of $t$ (except knowing that $t\in T$).

The objective of $\eve$ and $\snake$ is then composed of three conditions:
\begin{enumerate}
    \item $\snake$ must choose a topology $t\in T$.
    \item If the strategy $\sigma'_p$ proposed by $\adam$ does not in fact deviate from the profile $\vec{\sigma}$ proposed by $\eve$ (dubbed ``$\adam$ \emph{obeys} $\eve$''), and if $t\in T_p$, i.e., $p$ was intended to win in $t$, then the outcome must be winning for Player $p$.
    \item If $\adam$ selected $T$ to contain a topology not in $T_p$ (i.e., Player $p$ potentially tries to win in a superset of $T_p$), then the outcome must be losing for Player $p$.
\end{enumerate}
The overall idea is that if $\eve$ can find a strategy for all the players, from which any deviation choice of $\adam$ can be shown to be non-beneficial by an appropriate choice by $\snake$, then there is a CNE with the intended winning topologies, and vice-versa. 

There are, however, some caveats: first, in order to allow $\adam$ to choose any set of topologies, the size of the game would be exponential, which is undesirable. Second, it is not immediate that the conjunction of conditions above can be captured by a small parity objective (since the parity condition does not allow conjunction without a change of state space~\cite{boker2018these}). Third, we need to separate the cases where $\adam$ obeys $\eve$. In the following we give the complete construction, which overcomes these caveats.

\subsection{Reduction to Partial Information Game}
\label{sec:CNE reduction}
Consider an MTG $\game=\multiTopologyGame$. For every Player $p\in \players$, fix $T_p \subseteq \topologies$ to be the intended set of winning topologies.

\subparagraph*{Game construction} We construct a 3-player partial-information game $\cH$ with the following components.
The players are $\eve$, $\adam$ and $\snake$. 
The states of $\cH$ are $Q_\cH = \{q_0\}\cup Q$, where $q_0$ is a designated initial state and $Q\subseteq \states \times \players \times 2^\topologies \times \topologies \times \{\true,\false\}$ is described in the following.
A state $(s, p, T, t, b)\in Q$ comprises $s\in \states$ which tracks the state of $\game$, a player $p\in \players$ that is controlled by $\adam$, a set $T\subseteq \topologies$  of topologies that $\adam$ picks, $t\in \topologies$ is a topology picked by $\snake$ and determines the topology $\game$ is played in, and a bit $b\in \{\true,\false\}$ which tracks whether $\adam$ obeys $\eve$.

In order to restrict the state space to a polynomial size in $|\game|$, i.e. reduce the $2^\topologies$ component, we define $\cT_p = \{T_p\cup \{t\}\mid t\in \topologies\} \subseteq 2^\topologies$ and $\cT=(\bigcup_{p\in \players}\cT_p) \cup \{\{t\}\mid t\in \topologies\}$. Note that $|\cT|\le (|\players|+1)\cdot |\topologies|\le  2\cdot |\players|\cdot |\topologies|$.  We now define $Q=\states \times \players\times \cT\times \topologies\times \{\true,\false\}$. Intuitively, the restriction of $2^\topologies$ to $\cT$ is sound, since if a Player $p$ is able to deviate and increase her winning topologies from $T_p$ to some $T$, then she can also increase her winning topologies by just one topology, and thus we can assume $T\in \cT_p$.

We now turn to define the transitions in $\cH$. The actions are defined implicitly by the transitions.\footnote{In the model we describe, actions are identical for all players. However, the model of~\cite{chatterjee2014games} allows different actions as well as enabled and disabled actions in each state, so it is easy to accommodate our actions.}
From $q_0$, $\adam$ selects a player $p\in \players$ and a set of topologies $T \in \cT_p$. As explained in \cref{sec:CNE reduction overview}, $\adam$ controls Player $p$ and attempts to show that $p$ wins in $T$. Still in $q_0$, $\snake$ selects a topology $t\in \topologies$ that $\game$ will be played in. Then, $\cH$ transitions to state $(s_0,p,T,t,\true)\in Q$.

Henceforth, $p,T$ and $t$ remain fixed throughout the play, and $\snake$ has no further effect on the play.
From state $(s, p, T, t, b) \in Q$, $\eve$ chooses an action profile $\vec{a}\in \actions^\players$ 
and $\adam$ selects an action $a'_p\in \actions$. Then, the game transitions to state $(s', p, T, t, b') \in Q$ such that $s'=\transFunc_t(s,\substitute{\vec{a}}{p}{a'_p})$,
and $b' = b \land a_p = a'_p$. That is, $\eve$ chooses an action profile, $\adam$ chooses a possible deviation, and the game proceeds according to $\game_t$. If $\adam$ actually deviates, the bit $b$ becomes $\false$ and remains so throughout the play.
Adding $\{\{t\}\mid t\in \cT \}$ to $\cT$ is to make sure that if Player $p$ is supposed to win in topology $t$ (that is, $t\in T_p$), then, the profile suggested by $\eve$ must lead to player $p$ winning in topology $t$. If not, $\adam$ can choose $\{t\}$ and Player $p$ at the start of the game, and obey $\eve$, falsifying one of $\eve$'s winning conditions ($\psi_2$).

Next, we define the observation sets of $\cH$. For a state $q = (s, p, T, t, b) \in Q$ we define the \emph{projection} of $q$ on $\game$ to be $\proj(q) = s$. 
For every state $s\in \states$ of $\game$, let $o_s = \{q\in Q \mid \proj(q) = s\} \subseteq Q$.
The observation sets in $\cH$ are $\observations_\adam=\observations_\eve=\observations=\{\{q_0\}\}\cup \{o_s \mid s\in \states\}$. That is, $\adam$ and $\eve$ can observe the initial state $q_0$, and for every $q\in Q$ they can only observe $\proj(q)$. $\snake$ has perfect information.

This completes the construction of the game $\cH$ (recall that $\cH$ does not have an objective). We proceed to formalize the connection between $\game$ and $\cH$.

\subparagraph*{Correspondence between $\cH$ and $\game$}
We lift the definition of projection to plays: for a play $\rho = q_0 q_1 q_2 ... \in q_0\cdot Q^\omega$ of $\cH$ define $\proj(\rho) = \proj(q_1) \proj(q_2) ...$ (note that we skip the initial state $q_0$). 
We also define the predicate $\obey(\rho) = \bigwedge_{i\ge 1} b_i$, where $b_i$ is the $\true/\false$ bit of $q_i$.
That is, $\obey(\rho)$ is true if and only if $\adam$ always takes the actions suggested by $\eve$. When $\obey(\rho)$ is true, we say that \emph{$\adam$ obeys $\eve$}. 

Since the observation of $\eve$ and $\adam$ correspond to states of $\game$, there is a correspondence between plays, observation-histories and strategies in $\cH$ to plays, histories and strategies in $\game$. We make this precise in the following.
Consider the function $\gamma_\obs : \{q_0\} \cdot \observations^\omega \to \states^\omega$ defined $\gamma_\obs(\{q_0\}, o_{s_0}, o_{s_1}, \ldots) = s_0, s_1, \ldots$. Since $o_{s}=\{q\mid \proj(q)=s\}$ for every $s\in \states$, we have that $\gamma_\obs$ is a bijection between observation-plays of $\eve$ and $\adam$ in $\cH$, and plays of $\game$. By looking at finite sequences, namely histories, we can refer to $\gamma_\obs$ as a bijection between observation-histories of $\adam$ and $\eve$ in $\cH$, and histories in $\game$.
Moreover, since strategies in $\cH$ are observation based, the following functions are also bijective:
\begin{itemize}
    \item $\gamma_\eve : \strategies{\eve}{\cH} \to \strategies{}{\game}$ defined by $\gamma_\eve(\sigma_\eve) = \sigma_\eve \circ \gamma_\obs^{-1}$.
    
    \item $\gamma_\adam : \strategies{\adam}{\cH} \to \bigcup_{p\in \players} \{p\}\times \mathcal{T}_p \times \strategies{p}{\game}$ defined $\gamma_\adam(\sigma_\adam) = (p,T,\sigma_p')$ such that $\sigma_\adam(q_0)=(p,T)$ are the player and the set of topologies selected by $\adam$ in state $q_0$, and $\sigma_p' = \sigma_\adam \circ \gamma_\obs^{-1}$ is the deviating strategy in $\game$ induced by the deviation proposed in $\sigma_\adam$ in $\cH$.
    
    \item $\gamma_\snake : \strategies{\snake}{\cH} \to \topologies$ defined by $\gamma_\snake(\sigma_\snake) = \sigma_\snake(q_0)$ (recall that $\snake$ only acts in $q_0$).
\end{itemize}
For readability, we omit the the subscript and write $\gamma$ instead of $\gamma_\obs,\gamma_\adam,\gamma_\eve,\gamma_\snake$. The correct subscript can be resolved from context. Intuitively, $\gamma$ is the correspondence from strategies/histories/plays in $\cH$ to their counterpart in $\game$.

The connection between strategies and outcomes in $\cH$ and $\game$ is formalized in the following lemma (see \cref{apx:CNE corresponding outcome} for the proof).

\begin{lemma}
\label{lem:CNE corresponding outcome}
    Consider strategies $\sigma_\eve \in \strategies{\eve}{\cH}$, $\sigma_\adam \in \strategies{\adam}{\cH}$ and $\sigma_\snake\in \strategies{\snake}{\cH}$. Let $\vec{\sigma} = \gamma(\sigma_\eve)$, $(p,T,\sigma_p') = \gamma(\sigma_\adam)$ and
     $t = \gamma(\sigma_\snake)$.
    Let $\rho = \outcome[\cH]{\sigma_\eve,\sigma_\adam,\sigma_\snake}$, 
    $\pi' = \outcome[\game_t]{\substitute{\vec{\sigma}}{p}{\sigma_p'}}$,
    and $\pi = \outcome[\game_t]{\vec{\sigma}}$.
    Then 
    $\proj(\rho) = \pi'$.
    Furthermore, if $\adam$ obeys $\eve$ on $\rho$ then
    $\proj(\rho) = \pi = \pi'$.
\end{lemma}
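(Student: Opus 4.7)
My plan is to prove both claims simultaneously by induction on the prefix length, pushing the bijective correspondences $\gamma$ through the unfolding of the outcomes. The argument is essentially bookkeeping: the construction of $\cH$ was designed to mimic $\game_t$ under the profile $\substitute{\vec{\sigma}}{p}{\sigma_p'}$ once $\snake$ has fixed $t$ and $\adam$ has fixed $(p,T)$ at $q_0$, so the main task is to verify that the translation $\gamma_\obs$ between observation-histories in $\cH$ and state-histories in $\game$ carries strategies correctly.

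First, I would unpack the initial transition: since $\sigma_\adam(q_0) = (p,T)$ and $\sigma_\snake(q_0) = t$ by the definitions of $\gamma_\adam$ and $\gamma_\snake$, the deterministic transition from $q_0$ yields $q_1 = (s_0, p, T, t, \true)$, so $\proj(q_1) = s_0$, matching the initial state of both $\pi$ and $\pi'$. For the inductive step, assume that after $k$ rounds past $q_1$ we have $q_{k+1} = (s_k, p, T, t, b_k)$ with $s_0 s_1 \cdots s_k$ equal to the length-$(k+1)$ prefix of $\pi'$. Then $\eve$'s action profile at $q_{k+1}$ equals $\sigma_\eve(\{q_0\}, o_{s_0}, \ldots, o_{s_k})$, which by $\gamma_\eve = \sigma_\eve \circ \gamma_\obs^{-1}$ is exactly $\vec{\sigma}(s_0, \ldots, s_k)$; analogously, $\adam$'s action at $q_{k+1}$ is $\sigma_p'(s_0, \ldots, s_k)$. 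The $\cH$-transition then produces $\proj(q_{k+2}) = \transFunc_t(s_k, \substitute{\vec{\sigma}(s_0, \ldots, s_k)}{p}{\sigma_p'(s_0, \ldots, s_k)})$, which is exactly the $(k+1)$-st transition of $\pi'$ under $\substitute{\vec{\sigma}}{p}{\sigma_p'}$ in $\game_t$. This establishes $\proj(\rho) = \pi'$.

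For the ``furthermore'' part, I would observe that $\obey(\rho)$ means $b_i = \true$ for all $i \geq 1$, which by the definition of $b'$ in the transitions of $\cH$ is equivalent to $\adam$'s action at every round coinciding with the $p$-th coordinate of $\eve$'s action profile. Under this condition, $\substitute{\vec{\sigma}(s_0, \ldots, s_k)}{p}{\sigma_p'(s_0, \ldots, s_k)} = \vec{\sigma}(s_0, \ldots, s_k)$ at each step, so the outcome of $\game_t$ under $\substitute{\vec{\sigma}}{p}{\sigma_p'}$ and under $\vec{\sigma}$ agree pointwise, giving $\pi = \pi'$. Combined with the first claim, this yields $\proj(\rho) = \pi = \pi'$.

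The only real obstacle is notational: I must be careful that $\gamma_\obs$ skips the initial observation $\{q_0\}$ when extracting the $\game$-history, and that the definition of $\gamma_\adam$ treats the first-move decision $(p,T)$ separately from the subsequent deviating actions. Once this is written out cleanly, no deeper reasoning is needed beyond routine induction.
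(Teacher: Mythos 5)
Your proposal is correct and follows essentially the same route as the paper's proof: an induction on prefix length showing $\proj(\rho_{\leq k+1}) = \pi'_{\leq k}$, with the obeying case handled by noting that $\substitute{\vec{\sigma}}{p}{\sigma_p'}$ and $\vec{\sigma}$ then prescribe the same action profile at every step. Your write-up is in fact somewhat more explicit than the paper's about how $\gamma_\eve$ and $\gamma_\obs$ translate observation-histories into $\game$-histories, but the argument is the same.
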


\subparagraph*{Objective for $\cH$}
As sketched in \cref{sec:CNE reduction overview}, the objective $\alpha$ in $\cH$ is constructed so that $\eve$ and $\snake$ can win if and only if there is a CNE in $\game$ with winning topologies $(T_p)_{p\in \players}$. 

We define $\alpha$ as a conjunction of three conditions $\alpha = \{\rho\in q_0\cdot Q^\omega \mid \psi_1(\rho) \land \psi_2(\rho) \land \psi_3(\rho)\}$, where the conditions are defined as follows.
Consider a play $\rho = q_0, (s_0, p, T, t, b_0),  (s_1, p, T, t, b_1),  \ldots$  of $\cH$.
\begin{itemize}
    \item $\psi_1(\rho) := t\in T$. That is, $\psi_1$ forces $\snake$ to choose a topology from the set of topologies selected by $\adam$.
    \item $\psi_2(\rho) := (\obey(\rho) \land t\in T_p) \to \proj(\rho)\in \objective_{t,p}$. That is, $\psi_2$ is satisfied if whenever $\adam$ obeys $\eve$ then Player $p$ wins in any topology $t\in T_p$ selected by $\snake$.
    \item $\psi_3(\rho) := T_p \subsetneq T \to \proj(\rho)\notin \alpha_{t,p}$. That is, $\psi_3$ is satisfied if whenever $\adam$ tries to win in a strict superset of $T_p$, then Player $p$ loses in the topology selected by $\snake$.
\end{itemize}

As mentioned in \cref{sec:CNE reduction overview}, it is not clear that $\alpha$ can be expressed  as a single parity objective over $Q_\cH$.
Nonetheless, we prove that this is possible. 
The key observation is that the ``postconditions'' of $\psi_2$ and $\psi_3$ contradict, hence one of them must hold vacuously. This allows us to decouple the parity conditions for each of them and obtain a single parity objective that captures both, as follows.

For each objective $\alpha_{t,p}$ in $\game$ we write $\alpha_{t,p} = \parity(\Omega_{t,p})$
such that $\Omega_{t,p} : \states \to \{0,\ldots,d\}$ is the parity ranking function, where $d\in \nat$.
We define a new ranking function $\Omega : Q_\cH \to \{0,...,d + 1\}$, and show that $\alpha = \parity(\Omega)$.

First, observe that $q_0$ occurs only once in each play, so its parity rank has no effect. We arbitrarily set $\Omega(q_0)=0$.
Let $\rho \in q_0\cdot Q^\omega$ be a play of $\cH$ and $(s, p, T, t, b),(s', p', T', t', b') \in \Inf(\rho)$. 
It must be that $p=p'$, $T=T'$ and $t=t'$ since those are constant throughout the play, and $b=b'$ since it is either always $\true$ or from some point in $\rho$ it turns into $\false$ and stays that way to the rest of the play.

Let $q = (s,p,T,t,b) \in Q$.
We define $\Omega(q)$ by cases according to $p,T,t,b$, and show that in each case, $\rho \in \alpha$ if and only if $\rho \in \parity(\Omega)$, concluding that $\alpha = \parity(\Omega)$.
For a formula of the form $\psi = \varphi_1 \to \varphi_2$, we refer to $\varphi_1$ as the \emph{precondition} of $\psi$, and $\varphi_2$ as the \emph{postcondition} of $\psi$.
\begin{itemize}
    \item $t \notin T$:
    In this case, if $q\in \inf(\rho)$ then $\rho$ does not satisfy $\psi_1$, thus, $\rho \notin \alpha$. We set $\Omega(q) = 1$ to get $\rho \notin \parity(\Omega)$.
    
    \item $t \in T$, $b=\true$, $t \in T_p$ and $T_p \subsetneq T$:
    In this case, if $q\in \inf(\rho)$ then $\rho$ satisfies the preconditions of both $\psi_2$ and $\psi_3$, but the postconditions of $\psi_2$ and $\psi_3$ contradict, thus, $\rho \notin \alpha$.
    We set $\Omega(q) = 1$ to get $\rho \notin \parity(\Omega)$.

    \item $t \in T$, $b=\true \land t \in T_p$ and $\neg(T_p \subsetneq T)$:
    In this case, if $q \in \inf(\rho)$, then $\rho\in \alpha \iff \proj(\rho)\in \alpha_{t,p}$.
    So we set $\Omega(q) = \Omega_{t,p}(s)$, to apply the objective $\alpha_{t,p}$ over $\proj(\rho)$.
    
    \item $t \in T$, $\neg(b=\true \land t \in T_p)$ and $T_p \subsetneq T$:
    In this case, if $q\in \Inf(\rho)$, then $\rho\in\alpha \iff \proj(\rho)\notin \alpha_{t,p}$.
    So we set $\Omega(q) = \Omega_{t,p}(s) + 1$, to apply the complement of the objective $\alpha_{t,p}$ over $\proj(\rho)$.

    \item $t \in T$, $\neg(b=\true \land t \in T_p)$ and $\neg(T_p \subsetneq T)$:
    In this case, if $q\in \Inf(\rho)$ then $\psi_2$ and $\psi_3$ are vacuously satisfied, and $\rho\in \alpha$. So we set $\Omega(q) = 0$ to get that $\rho\in \parity(\Omega)$.
\end{itemize}

We are now ready to characterize the existence of a CNE in $\game$ by winning strategies in $\cH$.
\begin{lemma}
\label{lem:CNE_reduction_correctness}
Consider an MTG $\game=\multiTopologyGame$. Let $(T_p)_{p\in \players}$ be sets of topologies for each player and let $\cH$ be the corresponding partial-information game.
There exists a strategy profile $\vec{\sigma}$ in $\game$ such that $\vec{\sigma}$ is a CNE and for every $p\in \players$ we have $\wintop[\game]{p}{\vec{\sigma}} = T_p$ 
if and only if the follwing holds:
\[
\exists \sigma_\eve \in \strategies{\eve}{\cH}\ \forall \sigma_\adam \in \strategies{\adam}{\cH}\ \exists \sigma_\snake \in \strategies{\snake}{\cH}\ \outcome[\cH]{\sigma_{\eve}, \sigma_{\adam}, \sigma_{\snake}} \in \objective.
\]
\end{lemma}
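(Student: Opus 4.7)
I will prove the two implications separately, leveraging the bijection $\gamma$ and \cref{lem:CNE corresponding outcome} to transfer strategies and outcomes between $\cH$ and $\game$. Throughout, let $W(\sigma_p') := \wintop[\game]{p}{\substitute{\vec{\sigma}}{p}{\sigma_p'}}$ abbreviate a deviation's winning set.

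For the forward direction, fix a CNE $\vec{\sigma}$ with $\wintop[\game]{p}{\vec{\sigma}} = T_p$ for every $p$, and set $\sigma_\eve = \gamma^{-1}(\vec{\sigma})$. Given any $\sigma_\adam$ decoding to $(p,T,\sigma_p')$, I must exhibit a topology $t \in \topologies$ (i.e.\ a $\sigma_\snake$) so that the resulting play $\rho$ of $\cH$ satisfies $\psi_1(\rho) \land \psi_2(\rho) \land \psi_3(\rho)$. The choice splits on whether $\adam$ obeys $\eve$ along $\rho$. In the obeying case, \cref{lem:CNE corresponding outcome} yields $\proj(\rho) = \outcome[\game_t]{\vec{\sigma}}$, so $p$'s winning is governed by membership in $T_p$: if $T \setminus T_p \neq \emptyset$ pick any $t$ in that difference (making $\psi_2$ vacuous and satisfying $\psi_3$, since $t \notin T_p$), and otherwise $T \subseteq T_p$, in which case any $t \in T$ satisfies $\psi_2$ and renders $\psi_3$ vacuous. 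In the deviating case $\psi_2$'s obey-precondition fails automatically, so it suffices to ensure $\psi_3$: if $T \subseteq T_p$ any $t \in T$ works; otherwise $T_p \subsetneq T$, and the CNE property $W(\sigma_p') \not\supsetneq T_p$ supplies the required witness --- either some $t_0 \in T_p \setminus W(\sigma_p')$ (which lies in $T$ since $T_p \subseteq T$), or, if $W(\sigma_p') = T_p$, any $t \in T \setminus T_p$.

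For the backward direction, set $\vec{\sigma} = \gamma(\sigma_\eve)$. I first pin down $\wintop[\game]{p}{\vec{\sigma}} = T_p$ using two probing $\adam$-strategies. The inclusion $T_p \subseteq \wintop[\game]{p}{\vec{\sigma}}$ follows by letting $\adam$ play $(p,\{t\},\sigma_p)$ for $t \in T_p$, exploiting the singletons in $\cT$: $\psi_1$ forces $\snake$ to pick $t$, $\adam$ obeys by construction, and $\psi_2$ then demands that $p$ win in $\game_t$ under $\vec{\sigma}$. For the reverse inclusion, for each $t \notin T_p$ let $\adam$ play $(p, T_p \cup \{t\}, \sigma_p)$: $\psi_2$ would force $p$ to win and $\psi_3$ to lose on any $t' \in T_p$ chosen by $\snake$, so $\snake$'s only consistent choice is $t' = t$, and $\psi_3$ then forces $p$ to lose in $\game_t$ under $\vec{\sigma}$. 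To show that $\vec{\sigma}$ is a CNE, suppose for contradiction some $\sigma_p'$ satisfies $T_p \subsetneq W(\sigma_p')$, witnessed by $t_0 \in W(\sigma_p') \setminus T_p$. Let $\adam$ play $(p, T_p \cup \{t_0\}, \sigma_p')$; then $T_p \subsetneq T$, so $\psi_3$ demands that $p$ lose under the deviation in whichever $t \in T$ $\snake$ picks, but every $t \in T_p \cup \{t_0\} \subseteq W(\sigma_p')$ is winning for $p$ under the deviation by \cref{lem:CNE corresponding outcome}, contradicting the winning hypothesis.

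The main obstacle is matching $\snake$'s witness selection to the CNE condition in the deviating forward case, where one must decode the statement $W(\sigma_p') \not\supsetneq T_p$ into a concrete losing topology lying inside $T$; this is exactly where the two subcases ($W(\sigma_p') = T_p$ versus $W(\sigma_p') \not\supseteq T_p$) arise. A mirroring subtlety in the backward direction is that \emph{both} singletons $\{t\}$ and supersets $T_p \cup \{t\}$ in $\cT$ are essential: singletons lock $\wintop[\game]{p}{\vec{\sigma}}$ from below while supersets cap it from above, and only together do they pin down $\wintop[\game]{p}{\vec{\sigma}} = T_p$ exactly. All remaining bookkeeping --- in particular, the translation between the state-projected outcomes and $\game_t$-plays --- is discharged by \cref{lem:CNE corresponding outcome}.
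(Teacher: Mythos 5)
Your backward direction is correct and essentially coincides with the paper's: the singleton probes $(p,\{t\},\sigma_p)$ establish $T_p\subseteq\wintop[\game]{p}{\vec{\sigma}}$, the probes $(p,T_p\cup\{t\},\sigma_p)$ exploit the clash between the postconditions of $\psi_2$ and $\psi_3$ to force $\snake$'s choice and establish the reverse inclusion, and the refutation of a conservative deviation via $(p,T_p\cup\{t_0\},\sigma_p')$ is the paper's argument.

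The forward direction, however, has a structural gap: you split into cases on ``whether $\adam$ obeys $\eve$ along $\rho$'' \emph{before} choosing $t$, but $\rho$ is the play that results from that very choice, and obedience is a property of the play --- $\adam$'s observation-based strategy may agree with $\eve$'s suggestion along the play of $\game_{t_1}$ yet diverge along the play of $\game_{t_2}$. So the case split is circular. Concretely, writing $W(\sigma_p')=\wintop[\game]{p}{\substitute{\vec{\sigma}}{p}{\sigma_p'}}$, your ``obeying'' branch with $T_p\subsetneq T$ picks $t\in T\setminus T_p$ and certifies $\psi_3$ via $\proj(\rho)=\outcome[\game_t]{\vec{\sigma}}$; but if the deviation wins in that $t$ (i.e.\ $t\in W(\sigma_p')$, which the CNE property does not exclude since $W(\sigma_p')$ may be incomparable to $T_p$), then $\adam$ necessarily disobeys on the play in $\game_t$, $\proj(\rho)$ is the \emph{deviation} outcome, and $\psi_3$ fails for that choice. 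Your ``deviating'' branch in fact already contains the universally valid witness: both of its subcases produce an element of $T\setminus W(\sigma_p')$, and that set is nonempty precisely by the CNE property (if $T\subseteq W(\sigma_p')$ then $T_p\subsetneq T\subseteq W(\sigma_p')$). This is exactly the paper's choice: pick $t\in T\setminus W(\sigma_p')$, obtain $\psi_3$ from $t\notin W(\sigma_p')$ via \cref{lem:CNE corresponding outcome}, and obtain $\psi_2$ vacuously because $t\in T_p$ together with obedience at $t$ would force $t\in W(\sigma_p')$ --- a short argument that you assert (``the obey-precondition fails automatically'') but never supply. Dropping the obedience case split, always taking $t\in T\setminus W(\sigma_p')$ when $T_p\subsetneq T$, and adding that one-line vacuity argument repairs the proof.
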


\begin{proof}
Assume $\vec{\sigma}$ is a CNE in $\game$ such that for every $p\in \players$, $\wintop[\game]{p}{\sigma}=T_p$, and fix $\sigma_\eve = \gamma^{-1}(\vec{\sigma})$ to be the corresponding strategy for $\eve$ in $\cH$.
Consider a strategy $\sigma_\adam \in \strategies{\adam}{\cH}$ for $\adam$, and let  $(p,T,\sigma_p') = \gamma(\sigma_\adam)$. We show that there exists a strategy $\sigma_\snake\in \strategies{\snake}{\cH}$ so that the outcome satisfies $\alpha$. Recall that a strategy for $\snake$ amounts to choosing a topology. We divide to cases according to the choice of $T$ by $\adam$.
\begin{itemize}
    \item If $\neg (T_p \subsetneq T)$, then $\psi_3$ is satisfied vacuously. 
    Choose $t\in T$ for $\snake$, then $\psi_1$ is satisfied.
    If $\adam$ does not obey $\eve$ or $t \notin T_p$ then $\psi_2$ is vacuously satisfied.
    Otherwise, if $\adam$ obeys $\eve$ and $t\in T_p$, let $\rho = \outcome[\cH]{\sigma_\eve, \sigma_\adam, \sigma_\snake}$. In order to show that $\psi_2$ is satisfied we need to show that $\proj(\rho)\in \alpha_{t,p}$.
    Let $\pi = \outcome[\game_t]{\vec{\sigma}}$. 
    Since $T_p = \wintop[\game]{p}{\vec{\sigma}}$ and $t \in T_p$ we have that $\pi \in \alpha_{t,p}$.
    From \cref{lem:CNE corresponding outcome} we have that $\proj(\rho) = \pi$, so we get that $\proj(\rho)\in \alpha_{t,p}$, as required.

    \item If $T_p \subsetneq T$, denote $T' = \wintop[\game]{p}{\substitute{\vec{\sigma}}{p}{\sigma_p'}}$. Since $\vec{\sigma}$ is a CNE, we have that $\neg(T_p \subsetneq T')$, so $T \setminus T' \neq \emptyset$, as otherwise we would have that $T_p \subsetneq T \subseteq T'$. Choose $t \in T \setminus T'$ for $\snake$, then $\psi_1$ is satisfied.
    Let $\rho = \outcome[\cH]{\sigma_\eve, \sigma_\adam, \sigma_\snake}$,
    $\pi' = \outcome[\game_t]{\substitute{\vec{\sigma}}{p}{\sigma_p'}}$ and $\pi = \outcome[\game_t]{\vec{\sigma}}$.
    From \cref{lem:CNE corresponding outcome} we have that $\proj(\rho) = \pi'$ and if $\adam$ obeys $\eve$ then we have $\proj(\rho) = \pi = \pi'$.
    Note that since $t \notin T' = \wintop[\game]{p}{\substitute{\vec{\sigma}}{p}{\sigma_p'}}$ then $\pi' \notin \alpha_{t,p}$, so $\psi_3$ is satisfied.
    Finally, $\psi_2$ is satisfied vacuously since we cannot have $t \in T_p$ and that $\adam$ obeys $\eve$ simultaneously, as this would yield $T'=T_p=\wintop[\game]{p}{\vec{\sigma}}$, but $t\notin T'$.
\end{itemize}
We conclude that in all cases $\rho\in \alpha$, as required.

Conversely, assume that $\sigma_\eve \in \strategies{\eve}{\cH}$ is such that for every $\sigma_\adam \in \strategies{\adam}{\cH}$ there exists $\sigma_\snake \in \strategies{\snake}{\cH}$ such that $\outcome[\cH]{\sigma_\eve,\sigma_\adam,\sigma_\snake} \in \objective$.
Let $\vec{\sigma} = \gamma(\sigma_\eve)$.
We start by showing that for every $p\in \players$ it holds that $\wintop[\game]{p}{\vec{\sigma}}=T_p$. Indeed, let $p\in \players$ and $t \in \topologies$.

If $t \in T_p$,
take $\sigma_\adam\in \strategies{\adam}{\cH}$ that selects player $p$ and $T=\{t\}$, and obeys $\eve$. 
The only strategy $\sigma_\snake$ for $\snake$ that satisfies $\psi_1$ is to select $t$. Let 
$\rho=\outcome[\cH]{\sigma_{\eve}, \sigma_{\adam}, \sigma_{\snake}}$.
From $\psi_2$ we get that $\proj(\rho) \in \objective_{t,p}$, and by \cref{lem:CNE corresponding outcome} we have $\proj(\rho) = \outcome[\game_t]{\vec{\sigma}}$.
Thus, $t\in \wintop[\game]{p}{\vec{\sigma}}$.

If $t \notin T_p$, take $\sigma_\adam\in \strategies{\adam}{\cH}$ that selects Player $p$ and $T = T_p \cup \{t\}$, and obeys $\eve$.
Since $\adam$ obeys $\eve$, in order for $\psi_1$, $\psi_2$ and $\psi_3$ to be satisfied, $\snake$ must choose $t$, otherwise both preconditions of $\psi_2$ and $\psi_3$ hold, which means that in order to win we must have both $\proj(\rho)\in \alpha_{t,p}$ (by $\psi_2$) and $\proj(\rho)\notin \alpha_{t,p}$ (by $\psi_3$), which cannot hold.
Thus, $\snake$ chooses $t$, and from \cref{lem:CNE corresponding outcome} we have $\proj(\rho)=\outcome[\game_t]{\vec{\sigma}}$. By $\psi_3$ we have $\proj(\rho)\notin \alpha_{t,p}$, so $\outcome[\game_t]{\vec{\sigma}} \notin \objective_{t,p}$. Thus $t\notin \wintop[\game]{p}{\vec{\sigma}}$.
Therefore, $\wintop[\game]{p}{\vec{\sigma}} = T_p$.

It remains to show that $\vec{\sigma}$ is a CNE. 
Assume by way of contradiction that there exists a player $p\in \players$ with a beneficial deviation $\sigma_p'\in \strategies{p}{\game}$. That is, $T' = \wintop[\game]{p}{\substitute{\vec{\sigma}}{p}{\sigma_p'}}$ satisfies $T_p \subsetneq T'$.
We will construct a strategy of $\adam$ such that every strategy of $\snake$ is losing, thereby reaching a contradiction.
Let $T = T_p \cup \{t'\}$ for some $t'\in T\setminus T_p$ and fix $\sigma_\adam = \gamma^{-1}(p,T,\sigma_p')$. 
Consider a strategy $\sigma_\snake$, denote $t = \gamma(\sigma_\snake)$ and let $\rho = \outcome[\cH]{\sigma_\eve,\sigma_\adam,\sigma_\snake}$.
By \cref{lem:CNE corresponding outcome} we have $\proj(\rho)=\outcome[\game_t]{\substitute{\vec{\sigma}}{p}{\sigma_p'}}$, and because $t\in T \subseteq \wintop[\game]{p}{\substitute{\vec{\sigma}}{p}{\sigma_p'}}$ it holds that $\proj(\rho)\in \objective_{t,p}$. However, $T_p\subsetneq T$, so $\psi_3$ is violated, and $\rho \notin \alpha$, which is a contradiction. We conclude that $\vec{\sigma}$ is a CNE. 
\end{proof}

Using \cref{lem:CNE_reduction_correctness} we can decide whether a given MTG $\game$ has a CNE, by iterating over all possible sets of candidate winning topologies $(T_p)_{p\in \players}$, and repeatedly applying the reduction, and using the decision procedure of \cref{thm:krish result}. It remains to analyze the complexity of this procedure.

To this end, observe that the size of $\cH$ is polynomial in the size of $\game$.
Indeed, $|Q|\le |\states|\cdot |\players|\cdot |\cT|\cdot |\topologies|\cdot 2$ where $|\cT|\le 2|\players||\topologies|$. 
and the description of the actions is also polynomial in that of $\game$ (note that $\eve$ has exponentially more actions than each player in $\game$, but the overall description of the transition table in $\game$ is similarly exponential, cf. ~\cref{rmk:transition_representation}).

Finally, by \cref{thm:krish result}, solving $\cH$ takes double-exponential time in $|\game|$, and we have a single-exponential number of iterations, so the overall complexity remains double-exponential time in $|\game|$. This completes the proof of \cref{thm:CNE_decidable}.

\begin{remark}[Lower bounds and improving the upper bound]
\label{rmk:lower bounds}
We do not have a lower bound for the 2-EXPTIME complexity of \cref{thm:CNE_decidable}. Indeed, we suspect that this bound can be lowered. This is due in part to the fact that game $\cH$ we construct does not utilize the full scope of \cref{thm:krish result} from~\cite{chatterjee2014games}. 
Unfortunately, the decision procedure in~\cite{chatterjee2014games} goes through three nontrivial reductions, one of which involves Safra's determinization, that is notoriously difficult to analyze:
The first reduction \cite{chatterjee2010complexity, chatterjee2014games} transforms the objective to a visible objective for $\adam$ which involves the determinization of a parity automaton.
The second reduction \cite{chatterjee2014games} reduces the three-player partial-information game into a two-player partial-information game.
The third reduction uses the results of~\cite{raskin2007algorithms} to reduce the two-player partial-information game to a two-player perfect-information game.

Therefore, it is likely that improving the bound (if indeed possible) will involve devising an ad-hoc procedure, possibly using some key ideas from~\cite{chatterjee2010complexity,chatterjee2014games,raskin2007algorithms}.
\end{remark}

\section{Existence of Greedy NE is Decidable}
\label{sec:solving GNE}
We now turn our attention to Greedy NE (GNE). Recall that a greedy beneficial deviation is one that wins in a previously-losing topology, even at the cost of losing in previously-winning topologies. That is, given an MTG $\game = \multiTopologyGame$, a profile $\vec{\sigma} \in \strategies{\players}{\game}$ is a GNE if for every $p\in \players$, $\sigma_p'\in \strategies{}{\game}$ and $t\in \topologies$, if $p \in \winners[\game_t]{\substitute{\vec{\sigma}}{p}{\sigma'_p}}$ then $p \in \winners[\game_t]{\vec{\sigma}}$.

Intuitively, reasoning in the greedy approach is much less delicate than the conservative approach, since a deviating player need not concern itself with keeping the current winning topologies. As we show in the following, this allows for an exponentially faster solution.

\begin{theorem}
\label{thm:GNE decidable}
The problem of deciding, given an MTG $\game$, whether there exists
a GNE in $\game$ is in EXPTIME.
\end{theorem}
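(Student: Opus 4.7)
The plan is to strengthen \cref{obs:GNE is NE on all} into an equivalence and reduce the existence of a GNE to a collection of much simpler partial-information games than in the CNE case. First, I would observe that $\vec{\sigma}$ is a GNE in $\game$ if and only if $\vec{\sigma}$ is a standard NE in $\game_t$ for every $t \in \topologies$ simultaneously. The forward direction is precisely \cref{obs:GNE is NE on all}, and the converse is immediate: the GNE formula is the same as ``$\forall t\in \topologies$: $\vec{\sigma}$ is an NE in $\game_t$'' up to a trivial quantifier rearrangement. This gives the useful reformulation: $\vec{\sigma}$ is a GNE iff for every $(p, t) \in \players \times \topologies$, either $\vec{\sigma}$ wins for $p$ in $\game_t$, or $p$ has no winning strategy against $\vec{\sigma}_{-p}$ in $\game_t$ (and these two cases are mutually exclusive).

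Building on this, I would guess for each pair $(p, t)$ which disjunct is supposed to hold---formally, a set $A \subseteq \players \times \topologies$---and ask whether there exists a profile $\vec{\sigma}$ such that (i) for each $(p, t) \in A$, $\vec{\sigma}$ wins for $p$ in $\game_t$, and (ii) for each $(p, t) \notin A$, no strategy $\sigma'_p$ of $p$ wins against $\vec{\sigma}_{-p}$ in $\game_t$. A GNE exists iff such a $\vec{\sigma}$ exists for some $A$.

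For each $A$ I would construct a 2-player partial-information game $\cH_A$, structurally similar to the CNE game $\cH$ from \cref{sec:CNE reduction} but stripped of its most expensive ingredients: no $\snake$, no set-of-topologies guess, and hence no blow-up via $\cT$. In $\cH_A$, $\eve$ proposes the profile while observing only the $\game$-state, and $\adam$ picks $(p, t)$ at the start with perfect information and then plays (possibly deviating) actions for Player $p$ as the play unfolds in $\game_t$. The objective is: if $(p, t) \in A$, then $\adam$'s obedience must lead to an outcome in $\objective_{t, p}$ (so $\vec{\sigma}$ genuinely wins); if $(p, t) \notin A$, the outcome must lie outside $\objective_{t, p}$ no matter what $\adam$ plays (so no deviation wins). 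By the same ``contradictory postconditions'' trick used in \cref{sec:CNE reduction}, these conditions combine into a single parity objective on $\cH_A$. The correctness proof is then a direct analogue of \cref{lem:CNE_reduction_correctness}, but simpler since $\snake$ is absent.

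Finally, for the complexity: $\cH_A$ has size polynomial in $|\game|$, and a 2-player partial-information parity game with one perfectly-informed player is solvable in EXPTIME via the standard subset construction on $\eve$'s observations, followed by solving a perfect-information parity game of exponential size. Since there are only $2^{|\players| \cdot |\topologies|}$ candidate sets $A$ and each is handled in EXPTIME, the overall running time remains singly exponential. The main obstacle I anticipate is justifying why the full machinery of \cref{thm:krish result} is unnecessary here: the per-topology decomposition from the equivalence above removes the need for the 3-player setup, and the explicit iteration over $A$ plays the role that $\snake$ played in \cref{sec:CNE reduction}, at no asymptotic cost.
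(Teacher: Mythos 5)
Your proposal is correct and follows essentially the same route as the paper's proof in \cref{apx:GNE}: guess the intended winning sets (your $A\subseteq\players\times\topologies$ is exactly the paper's $(T_p)_{p\in\players}$), build a two-player partial-information game in which $\eve$ proposes the profile seeing only the $\game$-state while $\adam$ picks a player and a single topology at the start and then possibly deviates, fold the two implications into one parity objective, and iterate over the exponentially many guesses, each instance solved in EXPTIME via \cref{thm:2 player partial information is EXPTIME-complete}. The only cosmetic difference is that you state the equivalence ``$\vec{\sigma}$ is a GNE iff it is an NE in every $\game_t$'' explicitly, whereas the paper records only one direction as \cref{obs:GNE is NE on all}.
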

Similarly to \cref{sec:solving CNE}, our approach is to reduce the problem at hand to solving a partial-information game. In the greedy setting, however, it suffices to use two-player games. Specifically, we employ the following result from \cite{chatterjee2010complexity}.
\begin{theorem}
\label{thm:2 player partial information is EXPTIME-complete}
Let $\game = \partialInformationGame$ with $\players = \{1,2\}$. Let $\alpha \subseteq \states^\omega$ be a parity objective.
The problem of deciding whether $\exists \sigma_1\in \strategies{1}{\game}\ \forall \sigma_2\in \strategies{2}{\game}\ \outcome[\game]{\sigma_1,\sigma_2} \in \alpha$ is EXPTIME-complete.
\end{theorem}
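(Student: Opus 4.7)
The theorem has two parts: an EXPTIME upper bound and matching hardness. For the upper bound I would use the classical \emph{knowledge-based subset construction}, turning $\game$ into a two-player game $\game'$ of \emph{perfect information} whose states are pairs $(s,B)$ with $s\in\states$ the true state and $B\subseteq \states$ the belief of Player $1$, i.e., the set of states consistent with her observation history. The initial state is $(s_0,\{s_0\})$. A joint move from $(s,B)$ on $(a_1,a_2)\in\actions^2$ yields $(s',B')$ where $s'=\transFunc(s,(a_1,a_2))$ and
\[
B' \;=\; \bigl\{\transFunc(\hat s,(a_1,\hat a_2))\ \bigl|\ \hat s\in B,\ \hat a_2\in\actions\bigr\}\,\cap\,\obs_1(s').
\]
The parity priority of $(s,B)$ is inherited from $\alpha$ via $s$. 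The key invariant, proved by a straightforward induction, is that $B$ is a deterministic function of Player $1$'s observation history in $\game$; consequently, observation-based strategies of Player $1$ in $\game$ correspond exactly to strategies in $\game'$ that depend only on the $B$-component. From this I would conclude that Player $1$ wins $\game$ for $\alpha$ iff she wins $\game'$ for the inherited parity objective.

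Since $|\game'|=O(|\states|\cdot 2^{|\states|})$ and two-player perfect-information parity games are solvable in time polynomial in the state space for bounded priority (and quasi-polynomial in general), solving $\game'$ takes singly-exponential time in $|\game|$, giving the EXPTIME upper bound. For the matching hardness I would invoke Reif's classical reduction from the acceptance problem of a polynomial-space alternating Turing machine $M$ on input $x$, which is the canonical EXPTIME-complete problem. Configurations of $M$ are encoded in the states of a partial-information game in which Player $1$ plays existential transitions, Player $2$ plays universal ones, and the tape contents are hidden from Player $1$. A consistency gadget, enforced through a safety (hence parity) objective, punishes any declaration by Player $1$ of a tape symbol inconsistent with the hidden true tape, so that $M$ accepts $x$ iff Player $1$ wins the constructed game.

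The principal technical obstacle is the \emph{correctness} of the subset construction: one must argue that restricting Player $1$'s strategies in $\game'$ to depend only on $B$ (hiding the true $s$-coordinate from her) does not lose strategic power. The cleanest way is to view $s$ as controlled by Player $2$ in $\game'$, so that Player $1$ effectively plays on the state space $2^\states$ while Player $2$ resolves both the actual state inside the belief and the adversarial action. Verifying that the parity objective, defined over $\states^\omega$, lifts correctly through the projection $(s,B)\mapsto s$ requires a careful but standard argument showing that infinite plays in $\game'$ project to plays in $\game$ with the same parity rank, and that winning belief-only strategies pull back to observation-based strategies in $\game$. Together with Reif's reduction for the lower bound, these steps establish EXPTIME-completeness.
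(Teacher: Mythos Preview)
The paper does not prove this theorem at all: it is quoted verbatim as ``the following result from~\cite{chatterjee2010complexity}'' and used as a black box in the GNE reduction. So there is no ``paper's own proof'' to compare against; your proposal is essentially a sketch of the argument in the cited reference (and in~\cite{raskin2007algorithms}), which is the right thing to point to.

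That said, two places in your sketch would need tightening if you actually had to write the proof. First, you call $\game'$ a \emph{perfect-information} game but then restrict Player~$1$ to strategies that depend only on the $B$-component; that is still partial information. Your parenthetical fix---let Player~$2$ additionally resolve the true state $s\in B$---is the right move, but once you do that the objective lives on the $s$-coordinate while Player~$1$ plays on $2^{\states}$, and for \emph{parity} this does not collapse to a parity condition on beliefs by ``inheriting the priority from $s$''. The actual constructions in~\cite{raskin2007algorithms,chatterjee2010complexity} need extra machinery (an antichain/fixpoint argument, or a detour through an equivalent objective) beyond the naive subset construction that suffices for safety/reachability; calling this step ``careful but standard'' is fair only if you cite where it is carried out.

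Second, your description of Reif's lower bound is slightly garbled. In the APSPACE simulation it is not the tape that is hidden from Player~$1$; rather, Player~$2$ secretly commits to a single tape cell index, only that cell's content is tracked in the (polynomial-size) state, and Player~$1$ must publicly declare the symbol under the head at every step. The hidden index lets Player~$2$ catch any inconsistency, which is what forces Player~$1$ to simulate $M$ honestly. With that correction, your hardness outline is fine and the reachability (hence parity) objective suffices.
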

We sketch the proof of \cref{thm:GNE decidable}. The complete construction and analysis are detailed in \cref{apx:GNE}.

\begin{proof}[Proof sketch]
As in~\cref{sec:CNE reduction}, we first fix a set of ``intended'' winning topologies $T_p\subseteq \topologies$ for each player $p\in \players$. Then, we ask whether $\game$ admits a GNE $\vec{\sigma}$ in which $\wintop[\game]{p}{\vec{\sigma}}=T_p$ for every $p\in \players$. We then construct a 2-player partial-information game whose players are $\eve, \adam$, where $\eve$ again controls the coalition of all players. 

The behaviour of $\adam$ is different than in the conservative setting. Here, $\adam$ starts by choosing a deviating player $p\in \players$ and a \emph{single topology} $t\in \topologies$ where $p$ attempts to win. The topology $t$ is unobservable by $\eve$. The observations sets of $\eve$ and $\adam$ are again only the current state of $\game$. 
Then, the game is played on topology $t$ with $\eve$ suggesting an action profile, and
$\adam$ possibly deviating with Player $p$. 

The objective for $\eve$ now comprises two conditions: 
\begin{itemize}
    \item $\psi_1$ requires that whenever $\adam$ obeys $\eve$ and $t\in T_p$, the outcome is winning for Player $p$ in $\game_t$.
    \item $\psi_2$ requires that if $t\notin T_p$, then Player $p$ loses in $\game_t$.
\end{itemize}
    
Intuitively, $\adam$ tries to cause Player $p$ to win in a new topology $t$ in which Player $p$ is not intended to win, while $\eve$ is trying to prevent Player $p$ from achieving this, provided that Player $p$ is actually deviating. Note that $\eve$ must do this without knowing which topology is chosen, nor which player deviates (if at all). 
\end{proof}

\section{Discussion, Extensions and Future Work}
\label{sec:discussion}
We introduced MTGs and notions of NE pertaining to them, and showed that deciding whether an MTG admits either notion is decidable (in 2-EXPTIME for CNE and in EXPTIME for GNE). We have also explored the relationships and properties of these notions of NE. We now turn to explore several extensions, and remark about future research directions.

\subparagraph*{Social optimum}
A standard solution concept for concurrent games, apart from NE, is \emph{social optimum}, namely what is the maximum welfare the player can obtain by cooperating. Since in MTGs the winning sets of topologies may be incomparable, we formulate this as follows: given sets $(T_p)_{p\in \players}$, is there a strategy profile $\vec{\sigma}$ such that $\wintop[\game]{p}{\vec{\sigma}}=T_p$ for every $p\in \players$? 

Fortunately, the techniques we developed enable us to readily solve this problem. Indeed, we can modify the reduction used to decide the existence of GNE (\cref{sec:solving GNE}) so that $\adam$ chooses a player and a topology, but does not attempt to deviate and has no further effect on the game. Intuitively, $\adam$ ``challenges'' $\eve$ to show that the winning topologies for the players are exactly the intended ones. The complexity of this approach remains EXPTIME. 

\subparagraph*{Lower bounds} As discussed in~\cref{rmk:lower bounds}, we do not provide lower bounds for our results. Trivial lower bounds on the existence of CNE and GNE can be obtained from those of NE existence in concurrent games, namely $\text{P}^{\text{NP}}_{||}$-hardness~\cite{bouyer2015pure}. This, however, is unlikely to be tight. A central open challenge is to determine the exact complexity of CNE and GNE existence in MTGs. 

\subparagraph*{Additional notions of equilibria} The notions we propose, namely CNE and GNE, lie on two extremities: in the conservative setting a deviation is very strict, and in the greedy setting it is very lax. Generally, one can obtain a notion of equilibrium using any binary relation on $2^\topologies$, which describes what the beneficial deviations are for each player. Moreover, different players can have different relations.

Of particular interest is a quantitative notion of NE, whereby a player deviates if she can increase the \emph{number} of her winning topologies. This notion is fundamentally different from CNE and GNE, as it is not based on set containment, which is key to the correctness of our approach. 

\subparagraph*{Succinct representation of topologies} A central motivation for MTGs, demonstrated in \cref{xmp:process symmetry} and in \cref{sec:symmetric_games} concerns process symmetry. There, from a game with $k$ players, we construct an MTG with $k!$ topologies. However, these topologies can be succinctly represented by computing them on-the-fly. An interesting direction for future work is to determine whether we can devise a symbolic approach that is able to handle such MTGs without incurring an exponential blowup.

\subparagraph*{Logic for partial information games}
Another approach to solve the CNE and GNE existence problems is to formulate those problems with a logic for partial information games~\cite{berthon2021strategy, finkbeiner2010coordination, maubert2014logical}. 
In \cref{sec:SLii} we delve into this approach. As it turns out, while this approach can be described with a more straightforward formula than our solution, the complexity bounds it gives are 3-EXPTIME for both GNE and CNE existence. Moreover, writing the formula essentially requires an understanding of the approach we take in the paper. 
It may be possible to imporove this construction using a more elaborate analysis, but it is not clear what further merit such an analysis will have.

\bibliography{multi_topology_games}

\begin{thebibliography}{10}

\bibitem{almagor2020process}
Shaull Almagor.
\newblock Process symmetry in probabilistic transducers.
\newblock In {\em 40th IARCS Annual Conference on Foundations of Software
  Technology and Theoretical Computer Science}, 2020.

\bibitem{almagor2015repairing}
Shaull Almagor, Guy Avni, and Orna Kupferman.
\newblock Repairing multi-player games.
\newblock In {\em 26th International Conference on Concurrency Theory (CONCUR
  2015)}. Schloss Dagstuhl-Leibniz-Zentrum fuer Informatik, 2015.

\bibitem{berthon2021strategy}
Rapha{\"e}l Berthon, Bastien Maubert, Aniello Murano, Sasha Rubin, and Moshe~Y
  Vardi.
\newblock Strategy logic with imperfect information.
\newblock {\em ACM Transactions on Computational Logic (TOCL)}, 22(1):1--51,
  2021.

\bibitem{boker2018these}
Udi Boker.
\newblock Why these automata types?
\newblock In {\em LPAR}, volume~18, pages 143--163, 2018.

\bibitem{bouyer2017nash}
Patricia Bouyer, Nicolas Markey, and Steen Vester.
\newblock Nash equilibria in symmetric graph games with partial observation.
\newblock {\em Information and Computation}, 254:238--258, 2017.

\bibitem{bouyer2015pure}
Patricia~P Bouyer, Romain Brenguier, and Nicolas~N Markey.
\newblock Pure nash equilibria in concurrent games.
\newblock {\em Logical methods in computer science}, 2015.

\bibitem{brandt2011equilibria}
Felix Brandt, Felix Fischer, and Markus Holzer.
\newblock Equilibria of graphical games with symmetries.
\newblock {\em Theoretical Computer Science}, 412(8-10):675--685, 2011.

\bibitem{brenguier2017admissibility}
Romain Brenguier, Arno Pauly, Jean-Fran{\c{c}}ois Raskin, and Ocan Sankur.
\newblock Admissibility in games with imperfect information.
\newblock In {\em CONCUR 2017-28th International Conference on Concurrency
  Theory}, volume~85, pages 2--1. Schloss Dagstuhl--Leibniz-Zentrum fuer
  Informatik, 2017.

\bibitem{chatterjee2010complexity}
Krishnendu Chatterjee and Laurent Doyen.
\newblock The complexity of partial-observation parity games.
\newblock In {\em International Conference on Logic for Programming Artificial
  Intelligence and Reasoning}, pages 1--14. Springer, 2010.

\bibitem{chatterjee2014games}
Krishnendu Chatterjee and Laurent Doyen.
\newblock Games with a weak adversary.
\newblock In {\em International Colloquium on Automata, Languages, and
  Programming}, pages 110--121. Springer, 2014.

\bibitem{chatterjee2010strategy}
Krishnendu Chatterjee, Thomas~A Henzinger, and Nir Piterman.
\newblock Strategy logic.
\newblock {\em Information and Computation}, 208(6):677--693, 2010.

\bibitem{clarke1996exploiting}
Edmund~M. Clarke, Reinhard Enders, Thomas Filkorn, and Somesh Jha.
\newblock Exploiting symmetry in temporal logic model checking.
\newblock {\em Formal methods in system design}, 9(1):77--104, 1996.

\bibitem{de2007concurrent}
Luca De~Alfaro, Thomas~A Henzinger, and Orna Kupferman.
\newblock Concurrent reachability games.
\newblock {\em Theoretical computer science}, 386(3):188--217, 2007.

\bibitem{degorre2010energy}
Aldric Degorre, Laurent Doyen, Raffaella Gentilini, Jean-Fran{\c{c}}ois Raskin,
  and Szymon Toru{\'n}czyk.
\newblock Energy and mean-payoff games with imperfect information.
\newblock In {\em International Workshop on Computer Science Logic}, pages
  260--274. Springer, 2010.

\bibitem{emerson1996symmetry}
E~Allen Emerson and A~Prasad Sistla.
\newblock Symmetry and model checking.
\newblock {\em Formal methods in system design}, 9(1):105--131, 1996.

\bibitem{filiot2018rational}
Emmanuel Filiot, Raffaella Gentilini, and Jean-Fran{\c{c}}ois Raskin.
\newblock Rational synthesis under imperfect information.
\newblock In {\em Proceedings of the 33rd Annual ACM/IEEE Symposium on Logic in
  Computer Science}, pages 422--431, 2018.

\bibitem{finkbeiner2010coordination}
Bernd Finkbeiner and Sven Schewe.
\newblock Coordination logic.
\newblock In {\em International Workshop on Computer Science Logic}, pages
  305--319. Springer, 2010.

\bibitem{ham2013notions}
Nicholas Ham.
\newblock Notions of anonymity, fairness and symmetry for finite strategic-form
  games.
\newblock {\em arXiv preprint arXiv:1311.4766}, 2013.

\bibitem{ip1993better}
C~Norris Ip and David~L Dill.
\newblock Better verification through symmetry.
\newblock In {\em Computer Hardware Description Languages and their
  Applications}, pages 97--111. Elsevier, 1993.

\bibitem{lin2016regular}
Anthony~W Lin, Truong~Khanh Nguyen, Philipp R{\"u}mmer, and Jun Sun.
\newblock Regular symmetry patterns.
\newblock In {\em International Conference on Verification, Model Checking, and
  Abstract Interpretation}, pages 455--475. Springer, 2016.

\bibitem{maubert2014logical}
Bastien Maubert.
\newblock {\em Logical foundations of games with imperfect information: uniform
  strategies}.
\newblock PhD thesis, Universit{\'e} Rennes 1, 2014.

\bibitem{nisan2007algorithmic}
Noam Nisan, Tim Roughgarden, {\'{E}}va Tardos, and Vijay~V. Vazirani, editors.
\newblock {\em Algorithmic Game Theory}.
\newblock Cambridge University Press, 2007.
\newblock \href {https://doi.org/10.1017/CBO9780511800481}
  {\path{doi:10.1017/CBO9780511800481}}.

\bibitem{raskin2007algorithms}
Jean-Fran{\c{c}}ois Raskin, Thomas~A Henzinger, Laurent Doyen, and Krishnendu
  Chatterjee.
\newblock Algorithms for omega-regular games with imperfect information.
\newblock {\em Logical Methods in Computer Science}, 3, 2007.

\bibitem{stein2011exchangeable}
Noah~Daniel Stein.
\newblock {\em Exchangeable equilibria}.
\newblock PhD thesis, Massachusetts Institute of Technology, 2011.

\bibitem{tohme2019structural}
Fernando~A Tohm{\'e} and Ignacio~D Viglizzo.
\newblock Structural relations of symmetry among players in strategic games.
\newblock {\em International Journal of General Systems}, 48(4):443--461, 2019.

\bibitem{ummels2010complexity}
M~Ummels and DK~Wojtczak.
\newblock The complexity of nash equilibria in stochastic multiplayer games.
\newblock {\em Logical Methods in Computer Science}, 2010.

\bibitem{vester2012symmetric}
Steen Vester.
\newblock {\em Symmetric Nash Equilibria}.
\newblock PhD thesis, Master’s thesis, ENS Cachan, 2012.

\end{thebibliography}

\appendix

\section{Proofs}
\subsection{Proof of \cref{lem:CNE corresponding outcome}}
\label{apx:CNE corresponding outcome}
We prove by induction that for every $k\geq 1$, $\proj(\rho_{\leq k+1}) = \pi'_{\leq k}$, and if $\adam$ obeys $\eve$ then $\proj(\rho_{\leq k+1}) = \pi'_{\leq k} = \pi_{\leq k}$.
For $k = 1$, $\rho_{\leq 2} = q_0, (s_0,p,t,T,b_0)$ and $\pi'_{\leq 1} = \pi_{\leq 1} = s_0$ and we have that $\proj(\rho_{\leq k+1}) = \pi'_{\leq k}$.
Assuming that $\proj(\rho_{\leq k+1}) = \pi'_{\leq k}$ for $k\geq 1$, the next state of $\proj(\rho)$ will depend on the transition function $\transFunc_t$ and action profile $\substitute{\vec{\sigma}}{p}{\sigma_p'}(\pi'_{\leq k})$ from the way $\gamma$ and the transitions of $\cH$ are defined, and the next state in $\pi'$ will also depend on the same transition function and action profile. Thus, it holds that $\proj(\rho_{\leq k+2}) = \pi'_{\leq k + 1}$.
Farther more, if $\adam$ obeys $\eve$ then in every step the action that $\adam$ takes is identical to the action that $\eve$ suggests for Player $p$, so we have that $\substitute{\vec{\sigma}}{p}{\sigma_p'}(\pi'_{\leq k}) = \vec{\sigma}(\pi'_{\leq k})$, and $\pi_{\leq k + 1} = \pi'_{\leq k + 1}$, thus, $\proj(\rho_{\leq k+2}) = \pi_{\leq k + 1} = \pi'_{\leq k + 1}$. \hfill \qed

\section{Proof of \cref{thm:GNE decidable}}
\label{apx:GNE}
Consider an MTG $\game = \multiTopologyGame$. For every Player $p\in \players$ fix $T_p \subseteq \topologies$ to be the intended set of winning topologies.

\subparagraph*{Game construction}
We construct a two-player partial-information game $\cH$ with the following components. The players are $\eve$ and $\adam$.
The states of $\cH$ are $Q_\cH = \{q_0\} \cup Q$ such that
$q_0$ is a designated initial state 
and $Q = \states \times \players \times \topologies \times \{\true,\false\}$ is described in the following.
A state $(s,p,t,b)\in Q$ comprises of $s\in \states$ which tracks the state of $\game$, a player $p\in \players$ that is controlled by $\adam$, a topology $t\in \topologies$ that $\adam$ picks,
and a bit $b\in \{\true,\false\}$ which tracks whether $\adam$ obeys $\eve$.

We now turn to define the transitions of $\cH$.
The actions are defined implicitly by the transitions.
From state $q_0$, $\adam$ selects a player $p\in \players$ to control
and a topology $t\in \topologies$ that $\game$ will be played in. Then, $\cH$ transitions to state $(s_0,p,t,\true)\in Q$.
Henceforth, $p$ and $t$ remain fixed throughout the play.
From state $(s, p, t, b) \in Q$, $\eve$ chooses an action profile $\vec{a}\in \actions^\players$, and $\adam$ selects an action $a'_p\in \actions$ and $\cH$ transitions to state $(s',p,t,b') \in Q$ such that $s' = \transFunc_t(s,\substitute{\vec{a}}{p}{a'_p})$, and $b' = b \land (a'_p = a_p)$.

The observation sets for the players, $\proj$ and $\obey$ are defined similarly as \cref{sec:CNE reduction}.
Correspondence between $\cH$ and $\game$, $\gamma_\obs, \gamma_\eve$ is defined in the same way as in \cref{sec:CNE reduction}, and $\gamma_\adam : \strategies{\adam}{\cH} \to \bigcup_{p\in \players}\{p\}\times \topologies \times \strategies{p}{\cH}$ is defined for $\gamma(\sigma_\adam) = (p,t,\sigma_p')$ such that $(p,t)$ are the player and topology selected by $\sigma_\adam$ in state $q_0$ and $\sigma_p' = \sigma_\adam \circ \gamma_\obs^{-1}$.

The connection between strategies and outcomes in $\cH$ and $\game$ is formalized in the following lemma whose proof is similar to that of \cref{lem:CNE corresponding outcome}.

\begin{lemma}
\label{lemma:GNE corresponding outcome}
Consider strategies $\sigma_\eve \in \strategies{\eve}{\cH}$ and $\sigma_\adam \in \strategies{\adam}{\cH}$.
Let $\vec{\sigma} = \gamma(\sigma_\eve)$ and
$(p,t,\sigma_p') = \gamma(\sigma_\adam)$.
Let $\rho = \outcome[\cH]{\sigma_\eve,\sigma_\adam}$
$\pi' = \outcome[\game_t]{\substitute{\vec{\sigma}}{p}{\sigma_p'}}$ and
$\pi = \outcome[\game_t]{\vec{\sigma}}$.
Then,
$\proj(\rho) = \pi'$.
Furthermore, if $\adam$ obeys $\eve$ on $\rho$ then $\proj(\rho) = \pi = \pi'$.
\end{lemma}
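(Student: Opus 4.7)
The plan is to mirror the inductive argument used in the proof of \cref{lem:CNE corresponding outcome} (detailed in \cref{apx:CNE corresponding outcome}), specialized to the simpler two-player setting where $\adam$ commits only to a player $p$ and a topology $t$, with no $\snake$ and no set $T$ to track. Concretely, I would prove by induction on $k\geq 1$ that $\proj(\rho_{\leq k+1}) = \pi'_{\leq k}$, and that whenever $\adam$ obeys $\eve$ throughout the first $k+1$ states of $\rho$, additionally $\pi_{\leq k} = \pi'_{\leq k}$.

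For the base case $k=1$, the construction of $\cH$ forces $\adam$ at $q_0$ to pick exactly the pair $(p,t) = \gamma(\sigma_\adam)(q_0)$, after which $\cH$ transitions deterministically into the state $(s_0,p,t,\true)$. Its projection is $s_0$, which is also the first state of both $\pi$ and $\pi'$ in $\game_t$, so the claim holds; moreover, no deviation has yet occurred so the obey-case is trivial.

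For the inductive step, the key observation is that in $\cH$ we have $\observations_\eve = \observations_\adam = \{\{q_0\}\}\cup\{o_s\mid s\in\states\}$ with $o_s = \{q\in Q\mid \proj(q)=s\}$, so $\gamma_\obs$ is a bijection between observation histories of $\eve,\adam$ and state histories of $\game$. Assuming $\proj(\rho_{\leq k+1}) = \pi'_{\leq k}$, the definitions of $\gamma_\eve$ and $\gamma_\adam$ give that the action profile $\eve$ plays at step $k$ equals $\vec{\sigma}(\pi'_{\leq k})$, and that the action $a'_p$ played by $\adam$ equals $\sigma'_p(\pi'_{\leq k})$. The transition rule of $\cH$ then dictates that the $\states$-component of $\rho_{k+2}$ is $\transFunc_t(s_k,\substitute{\vec{a}}{p}{a'_p}) = \transFunc_t(\pi'_k,\substitute{\vec{\sigma}}{p}{\sigma'_p}(\pi'_{\leq k}))$, which is exactly $\pi'_{k+1}$, extending the induction. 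For the "furthermore" clause, if $\adam$ obeys $\eve$ through step $k+1$ then at every earlier step $a'_p = a_p$, so $\substitute{\vec{\sigma}}{p}{\sigma'_p}(\pi'_{\leq j}) = \vec{\sigma}(\pi'_{\leq j})$ for all $j\le k$, whence a parallel induction on $\pi$ gives $\pi_{\leq k+1} = \pi'_{\leq k+1}$.

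I do not expect any substantive obstacle: the lemma is essentially a mechanical unpacking of the definitions of $\cH$ and of the maps $\gamma_\obs,\gamma_\eve,\gamma_\adam$, and the absence of $\snake$ and of the topology-set component $T$ makes the argument strictly easier than its CNE analogue. A reasonable presentation would either reproduce the short inductive argument above or simply refer the reader to \cref{apx:CNE corresponding outcome}, noting that dropping the $\snake$-component and replacing the pair $(p,T)$ chosen by $\adam$ with the pair $(p,t)$ leaves the induction unchanged.
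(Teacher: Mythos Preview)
Your proposal is correct and matches the paper's approach: the paper does not give a separate proof of \cref{lemma:GNE corresponding outcome} but simply states that its proof is similar to that of \cref{lem:CNE corresponding outcome}, which is exactly the inductive unfolding you describe. The simplifications you note (no $\snake$, no set $T$) are precisely what make the argument a strict specialization of \cref{apx:CNE corresponding outcome}.
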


\subparagraph*{Objective for $\cH$}
Let $\rho = q_0 \cdot (s_0,p,t,b_0) \cdot (s_1,p,t,b_1) \cdot ... $ be a play in $\cH$.
The objective $\alpha$ is such that
$\rho \in \alpha \iff \psi_1(\rho) \land \psi_2(\rho)$, where 
\begin{itemize}
    \item $\psi_1(\rho) := (\obey(\rho) \land t \in T_p) \to \proj(\rho) \in \alpha_{t,p}$.
    \item $\psi_2(\rho) := t \notin T_p \to \proj(\rho) \notin \alpha_{t,p}$.
\end{itemize}
$\alpha$ can be expressed as a parity objective as follows.
For every $t\in \topologies$, $p \in \players$, 
let $\Omega_{t,p} : \states \to \{0,...,d_{t,p}\}$ be the priority function for the parity objective $\alpha_{t,p}$ in $\game$.
We construct a priority function $\Omega : Q_\cH \to \{0,...,d\}$ such that $d = \max\{d_{t,p} + 1 \mid t\in \topologies, p\in \players\}$. We set $\Omega(q_0) = 0$ and
for state $q = (s,p,t,b) \in Q$ we have
\[\Omega(q)=\begin{cases}
\Omega_{t,p}(s) + 1 & t \notin T_p\\
\Omega_{t,p}(s) & b \land t\in T_p\\
\Omega(q) = 0 & \neg b \land t\in T_p
\end{cases}\]
If $t\notin T_p$, then, according to $\alpha$, $\rho \in \alpha$ if and only if $\proj(\rho) \notin \alpha_{t,p}$. This is achieved by adding 1 to $\Omega_{t,p}$ which gives us the complement of $\alpha_{t,p}$.
The case where $\adam$ obeys $\eve$ and $t\in T_p$ is captured in the second case, where $\rho \in \alpha$ if and only if $\proj(\rho)\in \alpha_{t,p}$. This is achieved by setting $\Omega$ to be the same as $\Omega_{t,p}$.
In the last case, non of the preconditions of $\psi_1$ and $\psi_2$ hold, so $\rho \in \alpha$. This is achieved by setting $\Omega$ to 0, such that every such play will satisfy the objective.

\begin{lemma}
\label{lemma:GNE reduction correctness}
There exists a GNE $\vec{\sigma}\in \strategies{}{\game}$ in $\game$ with $\wintop[\game]{p}{\vec{\sigma}}=T_p$ for every $p\in \players$,
if and only if 
$\exists \sigma_\eve\in \strategies{\eve}{\cH}\ \forall \sigma_\adam\in \strategies{\adam}{\cH}\ \outcome[\cH]{\sigma_\eve,\sigma_\adam} \in \alpha$.
\end{lemma}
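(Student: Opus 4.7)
The plan is to prove both directions of the equivalence by exhibiting the correspondence between strategy profiles in $\game$ and strategies for $\eve$ in $\cH$, using \cref{lemma:GNE corresponding outcome} to translate outcomes. Concretely, I would establish: (i) if $\vec{\sigma}$ is a GNE with $\wintop[\game]{p}{\vec{\sigma}} = T_p$, then $\sigma_\eve := \gamma^{-1}(\vec{\sigma})$ wins $\cH$ against every $\adam$; and (ii) if $\sigma_\eve$ wins $\cH$ against every $\adam$, then $\vec{\sigma} := \gamma(\sigma_\eve)$ is a GNE with winning topologies exactly $T_p$. With this lemma in hand, the decision procedure iterates over all $(T_p)_{p\in\players}\in (2^\topologies)^\players$ and invokes \cref{thm:2 player partial information is EXPTIME-complete}.

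For the forward direction, I would fix an arbitrary $\sigma_\adam$ with $\gamma(\sigma_\adam)=(p,t,\sigma'_p)$ and let $\rho = \outcome[\cH]{\sigma_\eve,\sigma_\adam}$. To verify $\psi_1$, assume its precondition: $\adam$ obeys $\eve$ and $t\in T_p$. By \cref{lemma:GNE corresponding outcome}, $\proj(\rho)=\outcome[\game_t]{\vec{\sigma}}$, and since $t\in T_p=\wintop[\game]{p}{\vec{\sigma}}$, we get $\proj(\rho)\in\alpha_{t,p}$. To verify $\psi_2$, assume $t\notin T_p$. Again by the lemma, $\proj(\rho)=\outcome[\game_t]{\substitute{\vec{\sigma}}{p}{\sigma'_p}}$; if this were in $\alpha_{t,p}$ then $p$ would win in $\game_t$ after deviating from $\vec{\sigma}$, but since $\vec{\sigma}$ is a GNE this would force $t\in\wintop[\game]{p}{\vec{\sigma}}=T_p$, a contradiction. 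Hence $\proj(\rho)\notin\alpha_{t,p}$, and both clauses of $\alpha$ hold.

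For the reverse direction, I would first recover $\wintop[\game]{p}{\vec{\sigma}}=T_p$ by considering, for each $p$ and $t$, the strategy $\sigma_\adam$ that selects $(p,t)$ and subsequently obeys $\eve$. By \cref{lemma:GNE corresponding outcome}, $\proj(\rho)=\outcome[\game_t]{\vec{\sigma}}$; $\psi_1$ then yields $t\in T_p\Rightarrow t\in\wintop[\game]{p}{\vec{\sigma}}$, while $\psi_2$ yields $t\notin T_p\Rightarrow t\notin\wintop[\game]{p}{\vec{\sigma}}$. Second, to show $\vec{\sigma}$ is a GNE, suppose toward contradiction that some $p$ has a beneficial deviation $\sigma'_p$ winning in a topology $t\notin T_p$. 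Then I would let $\sigma_\adam=\gamma^{-1}(p,t,\sigma'_p)$: by the lemma, $\proj(\rho)=\outcome[\game_t]{\substitute{\vec{\sigma}}{p}{\sigma'_p}}\in\alpha_{t,p}$, violating $\psi_2$ since $t\notin T_p$, contradicting that $\sigma_\eve$ wins against $\sigma_\adam$.

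The complexity follows since $|\cH|$ is polynomial in $|\game|$, the number of candidate tuples $(T_p)_{p\in\players}$ is $2^{|\topologies|\cdot|\players|}$, and each partial-information game is solved in EXPTIME, yielding the EXPTIME bound overall. I do not expect a genuine obstacle: the structure mirrors the CNE reduction but is simpler, because in the greedy setting $\adam$ commits to a single topology, so the elaborate set-of-topologies machinery and the auxiliary $\snake$ player are unnecessary, and the objective $\alpha$ decomposes cleanly into two implications whose postconditions are mutually exclusive (they cannot simultaneously require $\proj(\rho)\in\alpha_{t,p}$ and $\proj(\rho)\notin\alpha_{t,p}$ under a single non-trivial precondition assignment), enabling the three-case parity encoding already sketched in the excerpt. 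The most delicate point is ensuring that when $\adam$ obeys and $t\in T_p$, the $\psi_1$ postcondition aligns with $\wintop[\game]{p}{\vec{\sigma}}$, which is precisely where $\gamma$ and the outcome-correspondence lemma do the work.
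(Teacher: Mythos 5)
Your proposal is correct and follows essentially the same route as the paper's proof: both directions use the bijections $\gamma$ together with the outcome-correspondence lemma, verify $\psi_1$ and $\psi_2$ by the same case analysis (obeying $\adam$ to pin down $\wintop[\game]{p}{\vec{\sigma}}=T_p$, deviating $\adam$ to certify the GNE property), and conclude with the identical iteration-over-$(T_p)_{p\in\players}$ complexity argument. The only cosmetic difference is that you verify $\psi_2$ in the forward direction by contraposition where the paper argues it directly.
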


\begin{proof}
Let $\vec{\sigma} \in \strategies{}{\game}$ be a GNE with $\wintop[\game]{p}{\vec{\sigma}}=T_p$ for every $p\in \players$.
Let $\sigma_\eve \in \strategies{\eve}{\cH}$ be the corresponding strategy for $\vec{\sigma}$, 
and let $\sigma_\adam \in \strategies{\adam}{\cH}$ be some strategy for $\adam$ that corresponds to $(p,t,\sigma_p')$.
Let $\rho = \outcome[\cH]{\sigma_\eve,\sigma_\adam}$.
If $\obey(\rho) \land t\in T_p$, then from \cref{lemma:GNE corresponding outcome} we have that $\proj(\rho) = \outcome[\game_t]{\vec{\sigma}}$, and since $t\in T_p = \wintop[\game]{p}{\vec{\sigma}}$ then $\outcome[\game_t]{\vec{\sigma}} \in \alpha_{t,p}$. Thus, $\psi_1$ is satisfied by $\rho$.
If $t \notin T_p$ then from \cref{lemma:GNE corresponding outcome} we have that $\proj(\rho) = \outcome[\game_t]{\substitute{\vec{\sigma}}{p}{\sigma_p'}}$ and since Player $p$ is losing in $t$ when $\game$ is played with $\vec{\sigma}$ and $\vec{\sigma}$ is a GNE, then $\outcome[\game_t]{\substitute{\vec{\sigma}}{p}{\sigma_p'}} \notin \alpha_{t,p}$. Thus, $\psi_2$ is satisfied and $\rho\in \alpha$. 

Conversely, let $\sigma_\eve \in \strategies{\eve}{\cH}$ be such that for any $\sigma_\adam\in \strategies{\adam}{\cH}$ we have $\outcome[\cH]{\sigma_\eve,\sigma_\adam} \in \alpha$.
Let $\vec{\sigma}\in \strategies{}{\game}$ correspond to $\sigma_\eve$. We show that $\vec{\sigma}$ is a GNE.
First, we show that for every $p\in \players$, $\wintop[\game]{p}{\vec{\sigma}} = T_p$.
Let $t\in \topologies$ and $p\in \players$.
Take $\sigma_\adam \in \strategies{\adam}{\cH}$ that corresponds to $(p,t,\sigma_p)$ where $\sigma_p$ is the strategy assigned to $p$ in $\vec{\sigma}$.
Let $\rho_t = \outcome[\game_t]{\vec{\sigma}}$ and $\rho = \outcome[\cH]{\sigma_\eve,\sigma_\adam}$.
We have that $\rho\in \alpha$.
Since $\adam$ obeys $\eve$ on $\rho$, from \cref{lemma:GNE corresponding outcome} we have that $\proj(\rho) = \rho_t$. If $t\in T_p$ then from $\psi_1$ we get that $\rho_t = \proj(\rho) \in \alpha_{t,p}$, thus, $t\in \wintop[\game]{p}{\vec{\sigma}}$.
If $t\notin T_p$ then from $\psi_2$ we get that $\rho_t = \proj(\rho) \notin \alpha_{t,p}$, thus, $t\notin \wintop[\game]{p}{\sigma}$. So we get that $\wintop[\game]{p}{\sigma} = T_p$.
Now, we show that $\vec{\sigma}$ is a GNE.
Let $p\in \players$, $\sigma_p' \in \strategies{p}{\game}$ and $t\in \topologies$ such that $t \notin T_p$.
Let $\sigma_\adam \in \strategies{\adam}{\cH}$ correspond to $(p,t,\sigma_p')$, and let $\rho = \outcome[\cH]{\sigma_\eve,\sigma_\adam}$. We have that $\rho \in \alpha$, thus, since $t\notin T_p$ then $\proj(\rho)\notin \alpha_{t,p}$.
From \cref{lemma:GNE corresponding outcome} we have that $\rho'_t = \outcome[\game_t]{\substitute{\vec{\sigma}}{p}{\sigma_p'}} = \proj(\rho) \notin \alpha_{t,p}$, thus, $t\notin \wintop[\game_t]{p}{\substitute{\vec{\sigma}}{p}{\sigma_p'}} = T_p$, so $\vec{\sigma}$ is a GNE.
\end{proof}
The algorithm for solving the GNE existence problem is,
for each $(T_p)_{p\in\players}\in (2^\topologies)^\players$ we construct $\cH$ from $\game$ and $(T_p)_{p\in \players}$, and check if there exists $\sigma_\eve\in\strategies{\eve}{\cH}$ such that for every $\sigma_\adam\in \strategies{\adam}{\cH}$, $\outcome[\cH]{\sigma_\eve,\sigma_\adam} \in \alpha$, if there exists such $\sigma_\eve$, then according to \cref{lemma:GNE reduction correctness} is corresponding strategy profile is a GNE, then we return it.
If we went through all $(T_p)_{p\in\players}\in (2^\topologies)^\players$, then return that there does not exist a GNE in $\game$.

The size of $\cH$ is polynomial in the size of $\game$. We copy each $s\in \states$ for every combination of $p\in \players$, $t\in \topologies$, $b\in \{\true,\false\}$, so we get $|Q_\cH| = 2\cdot |S|\cdot |\players| \cdot |\topologies| + 1$, which is polynomial in the size of $\game$. The number of actions in $\cH$ is also polynomial in the number of enabled actions in $\game$ (similarly to the analysis in~\cref{sec:CNE reduction}).

The algorithm performs at most $2^{|\topologies| \cdot |\players|}$ iterations, which is exponential in $|\game|$.
In each iteration we solve $\cH$ with size that is polynomial in $|\game|$, so according to \cref{thm:2 player partial information is EXPTIME-complete} this takes exponential time in $|\game|$, so the GNE existence problem is in EXPTIME.  

\newcommand{\slii}{\mathsf{SL_{ii}}}
\newcommand{\cgsii}{\mathsf{CGS_{ii}}}
\newcommand{\existsStrategy}[2]{\llangle{#1}\rrangle^{#2}}
\newcommand{\forallStrategy}[2]{\lsem{#1}\rsem^{#2}}
\newcommand{\bind}[2]{({#1},{#2})}
\newcommand{\unbind}[1]{({#1},?)}
\newcommand{\tempX}{\mathrm{X}}
\newcommand{\tempU}{\mathrm{U}}
\newcommand{\tempF}{\mathrm{F}}
\newcommand{\tempG}{\mathrm{G}}
\newcommand{\existsOut}{\mathrm{E}}
\newcommand{\forallOut}{\mathrm{A}}
\newcommand{\slAP}{\mathsf{AP}}
\newcommand{\slAgents}{\mathsf{Ag}}
\newcommand{\slVars}{\mathsf{Var}}
\newcommand{\slObsSymbol}{\mathsf{Obs}}
\newcommand{\slActions}{\mathsf{Ac}}
\newcommand{\slStates}{\mathsf{V}}
\newcommand{\slTrans}{\mathsf{E}}
\newcommand{\slLabel}{\mathcal{L}}
\newcommand{\slInitState}{v_0}
\newcommand{\slObsInterp}{\mathcal{O}}
\newcommand{\cgsiiExplicit}{\tup{\slActions,\slStates,\slTrans,\slLabel,\slInitState,\slObsInterp}}
\newcommand{\topPlayer}{T}
\newcommand{\nondet}{\mathtt{nd}}
\newcommand{\alt}{\mathtt{alt}}

\section{Strategy Logic with Imperfect Information}
\label{sec:SLii}
In this section we discuss solving the GNE and CNE existence problems using \emph{strategy logic with imperfect information}, $\slii$, introduced in~\cite{berthon2021strategy}.
$\slii$ is an expressive logic that is generally undecidable, but a decidable fragment, called \emph{hierarchical instances}, can capture the GNE and CNE existence problems.
The complexity of $\slii$ model-checking for hierarchical instances depends on a parameter called the \emph{simulation depth}.
$\slii$ model-checking for formulas with simulation depth up to $k$ is $(k+1)$-EXPTIME-complete, and the procedure suggested in~\cite{berthon2021strategy} is $(k+1)$-EXPTIME.
Our formulation of the GNE and CNE existence problems with $\slii$, has a simulation depth of 2 for both problems, resulting in 3-EXPTIME procedure for solving those. 
It might be possible that there is a different formulation with a lower simulation depth, lowering the complexity of this approach.

The section is organized as follows. In \cref{sec:slii overview} we give a short overview of $\slii$. In \cref{sec:mtg to cgsii} we discuss how to convert a multi-topology game to a model called \emph{concurrent game structure with imperfect information} that $\slii$ is interpreted over. Then, in \cref{sec:slii GNE} we formalize the GNE existence problem with $\slii$ and compute it's simulation depth. In \cref{sec:slii CNE} we do the same for the CNE existence problem.

\subsection{Overview of $\slii$}
\label{sec:slii overview}
$\slii$ formulas are defined over a number of fixed parameters -- a set of \emph{atomic propositions} $\slAP$, a set of \emph{players} (or \emph{agents}) $\slAgents$, a set of \emph{strategy variables} $\slVars$ and a set of \emph{observation symbols} $\slObsSymbol$.
$\slii$ formulas are interpreted over \emph{Concurrent Game Structure with Imperfect Information}, abbreviated $\cgsii$.
A $\cgsii$ is a tuple $\game = \cgsiiExplicit$ such that 
$\slActions$ is a set of \emph{actions},
$\slStates$ is a set of \emph{states},
$\slTrans: \slStates \times \slActions^\slAgents \to \slStates$ is a \emph{transition function},
$\slLabel: \slStates \to 2^\slAP$ is a \emph{labelling function},
$\slInitState \in \slStates$ is an \emph{initial state} and 
$\slObsInterp: \slObsSymbol \to 2^{\slStates \times \slStates}$ is an \emph{observation interpretation}, which maps each observation symbol $o\in \slObsSymbol$ to an equivalence relation over the states $\slObsInterp(o)\subseteq \slStates \times \slStates$. 
$\slii$ has the following \emph{syntax}:
\begin{align*}
\varphi &:= 
    p \mid 
    \neg \varphi \mid 
    \varphi \lor \varphi \mid
    \existsStrategy{x}{o}\varphi \mid
    \bind{a}{x}\varphi \mid 
    \unbind{a} \varphi \mid 
    \existsOut \psi;
    \ p\in \slAP, x\in \slVars, a\in \slAgents \\
\psi &:= 
    \varphi \mid 
    \neg \psi \mid
    \psi \lor \psi \mid
    \tempX \psi \mid
    \psi \tempU \psi;
\end{align*}
Formulas of type $\varphi$ are called \emph{state formulas} and formulas of type $\psi$ are called \emph{path formulas}. 
The boolean and temporal operators $\neg, \lor, \tempX, \tempU$ have their usual semantics.
The syntax is extended with the boolean and temporal operators $\land, \to, \tempF, \tempG$ that can be expressed with the operators already in the syntax.
The \emph{existential strategy quantifier} $\existsStrategy{x}{o}\varphi$ means, ``there exists a strategy $x$ over the observations $\slObsInterp(o)$ that satisfies $\varphi$''.
The syntax is extended with a \emph{universal strategy quantifier} defined  $\forallStrategy{x}{o}\varphi := \neg \existsStrategy{x}{o}\neg \varphi$.
The \emph{binding operator} $\bind{a}{x}$ binds strategy $x$ to player $a$ and the \emph{unbinding operator} $\unbind{a}$ unbinds player $a$ from it's current strategy.
The \emph{existential outcome quantifier} $\existsOut \psi$ means ``there exists an outcome of the current strategy assignment that satisfies $\psi$''. 
The syntax is extended with a \emph{universal outcome quantifier} defined $\forallOut \psi := \neg \existsOut \neg \psi$.
For a full description of the semantics of $\slii$ we refer readers to~\cite{berthon2021strategy}. 

An $\slii$ \emph{instance} is a pair $(\game,\Phi)$ where $\game$ is a $\cgsii$ and $\Phi$ is an $\slii$ state formula. 
In general, $\slii$ is undecidable. But, a fragment called \emph{hierarchical instances} is decidable. An \emph{hierarchical instance} is such that as we go down the syntax tree of the formula, observations only get finer.

The complexity of the model-checking problem for an hierarchical $\slii$ instance $(\game,\Phi)$ depends on the \emph{simulation depth} of $(\game,\Phi)$. The simulation depth is computed recursively on the formula's structure. The complexity of the model-checking procedure for an instance with simulation depth $k$ is $(k+1)$-EXPTIME.  
For a description of how to compute the simulation depth we refer readers to~\cite{berthon2021strategy}. 

\subsection{MTG to $\cgsii$}
\label{sec:mtg to cgsii}
In this section we show how to translate an MTG to a $\cgsii$ and a set of formulas that describe the players winning conditions.

Let $\game = \multiTopologyGame$ be an MTG. We denote the players $\players = \{p_1\ldots p_n\}$.
First, we fix the parameters over which the $\slii$ formulas are defined, $\slAP$, $\slAgents$, $\slVars$ and $\slObsSymbol$.
The set of atomic propositions is such that we can encode each state and each topology with a unique label (a subset of $\slAP$). This will enable us to write the LTL formula $\psi_{t,p}$ for every $t\in \topologies$ and $p \in \players$ which means that the topology $t$ is played and $p$'s objective is satisfied. 
The set of agents is $\slAgents = \players \cup \{\topPlayer\}$ where $\topPlayer$ is the \emph{topology player} that selects the topology. 
The set of strategy variables is $\slVars = \{\sigma_p \mid p\in \players\}\cup \{\sigma_p' \mid p\in \players\}$.
Since all players have the same observation sets (i.e., can observe the state, but not the topology), we only need a single observation symbol $o$.
Note that every $\slii$ instance with a single observation symbol is inherently hierarchical.

The $\cgsii$ that we use is $\cH = \cgsiiExplicit$. 
The actions in $\cH$ are the actions in $\game$ together with actions for $\topPlayer$ that enable him to select the topology in the first turn of the game.
The states of $\cH$ are $\slStates = (\states \times \topologies) \cup \{\slInitState\}$, where $\slInitState$ is the initial state where $\topPlayer$ selects the topology.
The transition function corresponds to the transition function of $\game$, and allowing $\topPlayer$ to select the topology from the initial state $\slInitState$.
The observation symbol $o$ is interpreted such that $\slInitState$ is distinguishable from all other states and $((s,t),(s',t')) \in \slObsInterp(o)$ (that is, $(s,t)$ and $(s',t')$ are indistinguishable) if and only if $s = s'$.

\subsection{Expressing GNE Existence Problem with $\slii$}
\label{sec:slii GNE}
The following formula expresses the GNE existence problem in $\slii$:
\begin{equation*}
\existsStrategy{\vec{\sigma}}{o}\bind{\players}{\vec{\sigma}}
\bigwedge_{p\in \players} \left[
    \forallStrategy{\sigma_p'}{o} \left(
        \bigwedge_{t\in \topologies} \left(
            \existsOut \psi_{t,p} \lor 
            \neg \bind{p}{\sigma_p'} \existsOut \psi_{t,p}
        \right)
    \right)
\right]
\end{equation*}
Where $\existsStrategy{\vec{\sigma}}{o} := \existsStrategy{\sigma_{p_1}}{o}\ldots \existsStrategy{\sigma_{p_n}}{o}$ is a shorthand way of writing ``there exists a strategy profile''. Similarly, $\bind{\players}{\vec{\sigma}} := \bind{p_1}{\sigma_{p_1}}\ldots \bind{p_n}{\sigma_{p_n}}$ is binding the strategy profile to the players.
When all players except for the topology player $\topPlayer$ are bound to a strategy, the formula $\existsOut \psi_{p,t}$ means that player $p$ wins in topology $t$ under the given strategy assignment.
After we quantify over strategy profiles, we require that for every player $p$ in $\game$, every strategy $\sigma_p'$ and every topology $t$, either player $p$ wins topology $t$ when players are assigned strategy profile $\vec{\sigma}$ or player $p$ loses topology $t$ when she changes her strategy to $\sigma_p'$.

\subparagraph*{Simulation depth}
Now, we compute the simulation depth of the instance.
The computation involves two parameters -- first is the current simulation depth $k \in \nat$ and the second is a parameter that can be either $\nondet$ or $\alt$.
The computation is performed according to Section 5.2 in~\cite{berthon2021strategy}.
Quantifying an LTL formula with $\existsOut$ gives the simulation depth $(0,\nondet)$. Thus, $sd \left(\existsOut \psi_{t,p}\right) = (0,\nondet)$.
Binding a strategy to a player does not change the simulation depth, so we have  
$sd \left(\bind{p}{\sigma_p'}\existsOut \psi_{t,p}\right) = (0,\nondet)$.
Negating a formula keeps the current simulation depth the same and sets the second parameter to $\alt$. Thus,
$sd \left(\neg \bind{p}{\sigma_p'}\existsOut \psi_{t,p}\right) = (0,\alt)$.
Taking a disjunction between two formulas results in the maximum of each parameter of the subformulas (where $\nondet < \alt$), thus, 
$sd \left(\existsOut \psi_{t,p} \lor \neg \bind{p}{\sigma_p'} \existsOut \psi_{t,p} \right) = (0, \alt)$.
The conjunction over all the topologies translates into a negation, disjunction and another negation. Since each subformula $\varphi$ has $sd \left( \varphi \right) = (0, \alt)$, we have that:
\begin{equation*}
sd \left(\bigwedge_{t\in \topologies} \left(\existsOut \psi_{t,p} \lor\neg \bind{p}{\sigma_p'} \existsOut \psi_{t,p}\right)\right) = (0,\alt)
\end{equation*}
The universal strategy quantifier translates into a negation that does not change the simulation depth, an existential strategy quantifier that increases the first parameter by 1 and sets the second parameter to $\nondet$ and another negation that sets the second parameter to $\alt$. So we have that 
\begin{equation*}
sd \left(
    \forallStrategy{\sigma_p'}{o} \left(
        \bigwedge_{t\in \topologies} \left(
            \existsOut \psi_{t,p} \lor 
            \neg \bind{p}{\sigma_p'} \existsOut \psi_{t,p}
        \right)
    \right)
\right)
= (1,\alt)
\end{equation*}
Binding the strategy profile to the players has no effect and the universal strategy quantifier increases the first parameter by 1 and sets the second to $\nondet$, thus,
\begin{equation*}
sd \left(
    \existsStrategy{\vec{\sigma}}{o}\bind{\players}{\vec{\sigma}}
    \bigwedge_{p\in \players} \left[
        \forallStrategy{\sigma_p'}{o} \left(
            \bigwedge_{t\in \topologies} \left(
                \existsOut \psi_{t,p} \lor 
                \neg \bind{p}{\sigma_p'} \existsOut \psi_{t,p}
            \right)
        \right)
    \right]
\right)
= (2,\nondet)
\end{equation*}
Making model-checking complexity of the instance to be 3-EXPTIME.

\subsection{Expressing CNE Existence Problem with $\slii$}
\label{sec:slii CNE}
The following formula expresses the CNE existence problem in $\slii$:
\begin{equation*}
\existsStrategy{\vec{\sigma}}{o}\bind{\players}{\vec{\sigma}}
\bigwedge_{p\in \players} \left[
    \forallStrategy{\sigma_p'}{o} \left(
        \left( 
            \bigwedge_{t\in \topologies} \left(
                \existsOut \psi_{t,p} \lor 
                \neg \bind{p}{\sigma_p'} \existsOut \psi_{t,p}
            \right)
        \right) \lor 
        \left(
            \bigvee_{t\in \topologies} \left(
                \existsOut \psi_{t,p} \land 
                \neg \bind{p}{\sigma_p'} \existsOut \psi_{t,p}
            \right)
        \right)
    \right)
\right]
\end{equation*}
The formula for CNE is similar to the formula for GNE. We change the subformula $\bigwedge_{t\in \topologies} \left(\existsOut \psi_{t,p} \lor \neg \bind{p}{\sigma_p'} \existsOut \psi_{t,p} \right)$, which means that for every topology $t$, player $p$ does not improve her outcome by switching to strategy $\sigma_p'$, by taking a disjunction with $\bigvee_{t\in \topologies} \left(\existsOut \psi_{t,p} \land\neg \bind{p}{\sigma_p'} \existsOut \psi_{t,p}\right)$, which means that there exists a topology where player $p$ wins, and loses if she changes her strategy to $\sigma_p'$.

\subparagraph*{Simulation depth}
The simulation depth of the two subformulas $\bigwedge_{t\in \topologies} \left(\existsOut \psi_{t,p} \lor \neg \bind{p}{\sigma_p'} \existsOut \psi_{t,p} \right)$ and $\bigvee_{t\in \topologies} \left(\existsOut \psi_{t,p} \land\neg \bind{p}{\sigma_p'} \existsOut \psi_{t,p}\right)$ is the same and is equal to $(0,\alt)$. Thus, the conjunction of the two results in a formula with simulation depth $(0,\alt)$. The next steps in the computation of the simulation depth are identical to the computations for GNE, making the simulation depth be $(2,\nondet)$ and the model-checking complexity to be 3-EXPTIME.

\end{document}